\newtheorem{theorem}{Theorem}
\newtheorem{lemma}{Lemma}
\newtheorem{corollary}{Corollary}
\newtheorem{definition}{Definition}
\newtheorem{proposition}{Proposition}
\newcommand{\cL}{\mathcal{L}}
\newcommand{\cA}{\mathcal{A}}
\newcommand{\cB}{\mathcal{B}}
\newcommand{\PTM}{\text{PTM}}
\newcommand{\cV}{\mathcal{V}}
\newcommand{\cU}{\mathcal{U}}
\newcommand{\bx}{\mathbf{x}}
\newtheorem*{theorem*}{Theorem}
\newenvironment{manualtheorem}[1]{%
	\manualtheoreminner
}{\endmanualtheoreminner}
\newcommand{\revise}[1]{{#1}}
\title{Limitations of Noisy Quantum Devices in Computing and Entangling Power}
\author[1]{Yuxuan Yan}
\author[1]{Zhenyu Du}
\author[1]{Junjie Chen}
\author[1,*]{Xiongfeng Ma}
\affil[1]{Center for Quantum Information, Institute for Interdisciplinary Information Sciences, Tsinghua University, Beijing 100084, P.~R.~China}
\affil[*]{xma@tsinghua.edu.cn}
\begin{abstract}
    Finding solid and practical quantum advantages via noisy quantum devices without error correction is a critical but challenging problem. Conversely, comprehending the fundamental limitations of the state-of-the-art is equally crucial. In this work, we consider the class of strictly contractive unital noise and derive its analytical representation by decomposition. Under such noise, we observe the polynomial-time indistinguishability of $n$-qubit devices from random coins when circuit depths exceed $\Omega(\log(n))$. Even with classical processing, we demonstrate the absence of computational advantage in polynomial-time algorithms with super-logarithmic noisy circuit depths. These results impact variational quantum algorithms, error mitigation, and quantum simulation with polynomial depth. Furthermore, we consider noisy quantum devices with a restricted gate topology. For one-dimensional noisy qubit circuits, we rule out super-polynomial quantum advantages in all-depth regimes. We also establish upper limits on entanglement generation: $O(\log(n))$ for one-dimensional circuits and $O(\sqrt{n} \log(n))$ for two-dimensional circuits. Our findings underscore the computational capacity and entanglement scalability constraints in noisy quantum devices.
\end{abstract}
\begin{document}

\flushbottom
\maketitle

\thispagestyle{empty}

\section*{Introduction}

Recent advancements in quantum computing have notably enhanced the scale and fidelity of quantum devices, surpassing classical brute-force simulation capabilities \cite{arute_quantum_2019, wu_strong_2021}. Meanwhile, scalable quantum error correction remains out of reach, primarily due to high noise levels and the limited qubit count. Consequently, current quantum devices, ranging from dozens to several hundred noisy qubits, are situated between classical computing and fault-tolerant quantum computing, a regime known as the Noisy Intermediate-Scale Quantum (NISQ) era \cite{preskill_quantum_2018}. Despite ongoing progress towards quantum error correction \cite{egan_fault-tolerant_2021,gong_experimental_2022,postler_demonstration_2022,abobeih_fault-tolerant_2022,zhao_realization_2022,google_quantum_ai_suppressing_2023, ni_beating_2023, bluvstein_logical_2023}, transitioning from NISQ to fault-tolerant quantum computation presents a formidable challenge, which is expected to take years or even decades.

	Within the NISQ paradigm, many important previous works have been devoted to the search for advantages over classical computers without error correction. Here, ``advantages'' refer to the quantum device's capability to accelerate computational tasks beyond purely classical means. In this line of research, specific problems demonstrate theoretical quantum advantages in shallow noisy circuit regimes \cite{bravyi_quantum_2020} or in oracle settings \cite{chen_complexity_2023}. In reality, experiments of noisy quantum circuit sampling challenge the ability of the most powerful classical computers \cite{arute_quantum_2019, wu_strong_2021}. However, in more practical applications, the noise in quantum devices often undermines advantages. Notably, as important candidates for practical advantage, variational quantum algorithms exhibit severe fragility to noise \cite{stilck_franca_limitations_2021, de_palma_limitations_2023}. These findings trigger active research on more noise-resilient algorithms \cite{temme_error_2017, endo_practical_2018, cai_quantum_2023} and, in the meantime, lead to a fundamental question: What is the limit of the noisy device's power in the NISQ era, applicable to generic computations?

	To properly address this question, we need an in-depth consideration of the classical computation part in quantum algorithms. In many cases, the cooperation between quantum and classical computations is significant for our understanding of the power of NISQ devices. When viewed only in itself, a noisy quantum device experiences a rapid loss of information due to noise and loses its computational power \cite{aharonov_limitations_1996, kastoryano_quantum_2013}. In reality, we often use classical computers to assist quantum devices and perform computations that might be difficult for quantum computers, thereby easing the limitations of noisy quantum devices. In the NISQ era, classical enhancements have been emphasized for their functionality in mitigating noise and amplifying the capabilities of quantum hardware \cite{endo_practical_2018, endo_hybrid_2021}. The problem of assessing NISQ advantages, with classical computer enhancements considered, is complicated and requires a systematic understanding.

	The classical simulatability is also a crucial factor in studying NISQ advantages \cite{markov_simulating_2008, noh_efficient_2020, cheng_simulating_2021, napp_efficient_2022, cheng_efficient_2023, aharonov_polynomial-time_2023}, as easier simulation implies weaker advantages. Some of the classical simulation algorithms have theoretical guarantees under certain conditions, such as when having small underlying graph treewidth \cite{markov_simulating_2008} or being in the anti-concentration regime under noise \cite{aharonov_polynomial-time_2023}. Under these conditions, polynomial-time classical algorithms exist, excluding super-polynomial quantum advantages. More generally, the classical simulatability and advantages of noisy quantum devices remain an open question.

	In this work, we establish a clear boundary for the limitations of noisy quantum devices. Our analysis is based on the strictly contractive unital noise model without error correction. For such a noise model, we present an analytical representation to characterize its behavior. Under such noise, we show the statistical indistinguishability of quantum device outputs from a uniform distribution when the circuit depth exceeds the logarithm of the running time. When incorporating any form of classical processing, we show that devices with super-logarithmic circuit depths $\Omega(\log(n))$ fail to deliver any quantum advantage for polynomial-time quantum algorithms. We also unify and strengthen previous findings in the limitations of NISQ algorithms, including variational quantum algorithms \cite{mcardle_quantum_2020, cerezo_variational_2021} and error mitigation \cite{temme_error_2017, endo_practical_2018, cai_quantum_2023}.

	Further, we extend our results to platforms under connectivity constraints, such as superconducting qubits \cite{arute_quantum_2019, wu_strong_2021}. In these cases, spatial locality further challenges the quantum advantages of noisy devices. At any depth, we show that one-dimensional (1D) noisy quantum circuits are simulable, which suggests the absence of super-polynomial advantages. To characterize limitations from a more physical perspective, we establish limits of the maximal entanglement generation in noisy quantum devices---for a 1D qubit array, the capacity to generate quantum entanglement is capped at $O(\log(n))$; for two-dimensional (2D) case, $O(\sqrt{n} \log(n))$. The results rule out the efficient creation of highly entangled states, such as highly excited or thermalized states of almost all quantum systems. Our findings are summarized in Fig.~\ref{fig:summary}.

	\begin{figure}[bthp!]
    \centering
    \subfigure[]{
    \raisebox{0cm}{ 
        \begin{minipage}[t]{\linewidth} 
            \centering
            \begin{tikzpicture}[
                scale=1,
                >=latex
                ]
                \def\height{8}   
                \def\trange{8} 

                \coordinate (O) at (0,0);
                \draw[->,thick] (O) --++ (\height,0) node[below] {Size $n$};
                \draw[->,thick] (O) --++ (0,\trange) node[above] {Depth};
                \fill[green!30] (\height-0.2, 0) -- plot[domain=0:\trange, variable=\t] ({\fpeval{ 2.5*exp(\t*0.14) }}, {\t}) -- (\height-0.2, \trange) -- cycle;
                \draw[samples=400,scale=1,domain=0:\trange,smooth,variable=\t,black] plot ({\fpeval{ 2.5*exp(\t*0.14) }}, {\t});
				\draw[samples=400,scale=1,domain=0:\trange,dashed,variable=\t,black] plot ({\fpeval{ 1.2+4.5*\t*0.14}}, {\t}); 
                \draw[samples=400,scale=1,domain=0:\trange,dashed,variable=\t,black] plot ({\fpeval{ 0.2+3.9*sqrt(\t*0.14)}}, {\t}); 
                \draw[samples=400,scale=1,domain=0:\trange,dashed,variable=\t,black] plot ({\fpeval{ 0.1+3.4*ln(1+\t*0.1)}}, {\t}); 
				\node[anchor=south] at ({\fpeval{ 2.5*exp(\height*0.14) }}, {\height}) {$\Omega(\frac{\log(n)}{|\log(\mu)|})$};
				\node[anchor=south] at ({\fpeval{ 1.2+4.4*\height*0.14}}, {\height}) {HHL $O(n)$};
                \node[anchor=south] at ({\fpeval{ 0.2+3.9*sqrt(\height*0.14)}}, {\height}) {Shor $O(n^2)$};
                \node[anchor=south] at ({\fpeval{ 0.1+3.4*ln(1+\height*0.1)}}, {\height}) {Grover $\exp(O(n))$};
                \node at (6, 3.5) {Possible advantage};
                \node at (6, 1.5) {\scalebox{8}{\textit{?}}};
                \node at (2.5, 5.5) { \color{orange} \scalebox{8}{\ding{55}} };
            \end{tikzpicture}
        \end{minipage}
    }
	}
    \subfigure[]{
        \begin{minipage}[t]{\linewidth}
            \centering
            \begin{tikzpicture}[
                scale=1,
                >=latex
                ]
                \def\height{5}
                \def\width{11.2}
                \coordinate (O) at (0,0);
                \draw[->,thick] (O) --++ (\width,0) node[below] {Topology};
                \draw[->,thick] (O) --++ (0,\height) node[above] {Depth};
                \node[thick,rectangle,draw,minimum width=4.5cm, minimum height=2.5cm,align=center,anchor=south west] (Zone1) at (O) {
                    Classial Simulatable \\[0.15cm]
                    \scalebox{1}{ Complexity $\sim n^{O(1/|\log(\mu)|)}$} \\
                    \scalebox{1}{ Entanglement $\sim O(\log(n))$}};

                \node[thick,rectangle,draw,fill=green!10,minimum width=3.5cm, minimum height=2.5cm,align=center,anchor=south west] (Zone2) at (Zone1.south east) {
                    Entanglement \\[0.15cm] $\sim O(\sqrt{n}\log(n))$};

                \node[thick,rectangle,draw,fill=green!50,minimum width=3cm, minimum height=2.5cm,align=center,anchor=south west] (Zone3) at (Zone2.south east) {\scalebox{4}{\textit{?}}};

                \node[thick,rectangle,draw,minimum width=11.05cm, minimum height=2cm,align=center,anchor=south west] (Zone4) at (Zone1.north west) {
                    Too noisy \\[0.15cm]
                    (like random coins) \\
                    {\color{orange} \scalebox{4}{\ding{55}}}
                };
                \node[left] at (0, 2.5) {$\Omega(\frac{\log(n)}{|\log(\mu)|})$};
                \node[below=2pt] at (Zone1.south) {1D};
                \node[below=2pt] at (Zone2.south) {2D};
            \end{tikzpicture}
        \end{minipage}%
    }
    	\caption{\revise{\textbf{Summary of the limitations for noisy quantum devices without error correction.} \textbf{a} For algorithms with generic classical processing control, we prove that devices with circuit depths beyond $\Omega(\log(n))$ under single-qubit depolarizing noise are too noisy to offer any computational advantage in a polynomial running time, including well-known quantum algorithms, such as Shor's, Grover's, and the Harrow-Hassidim-Lloyd algorithm. The scaling of depth with the qubit number, $n$, is shown by dashed curves for each algorithm, with a solid curve showing the logarithmic upper limit. The scaling is shown in an asymptotic limit, i.e., when $n$ is large. \textbf{b} In the regime where the circuit depth is below the logarithmic scaling, potential quantum advantages depend on the gate connection topology of noisy quantum devices. For the 1D case, we prove classical simulatability for noisy devices and give an entanglement upper bound of $O\left(\log (n)\right)$. For the 2D case, the entanglement generation upper bound scales as $O(\sqrt{n} \log (n))$. Therefore, super-polynomial advantages without error correction are only possible when gate connectivity is higher than one dimension, and the circuit depth is below logarithmic scaling for noisy devices. Such a regime is colored green.}}
		\label{fig:summary}
\end{figure}

	\section*{Results}

	In the NISQ era, we represent noisy device operations as a sequence of unitary gates interspersed with independent single-qubit noise $\Lambda_1$, assumed as a strictly contractive unital channel.
	\revise{
		Here, the contractive property for unital channels is characterized by the contractive rate,
		\begin{equation}
			\mu_1 = \sup_{\substack{\rho \in \mathcal{D}(\mathcal{H}) \\ 0 < D(\rho \| \frac{I}{2}) < \infty}}
			\frac{D(\Lambda_1(\rho) \| \frac{I}{2})}{D(\rho \| \frac{I}{2})},
		\end{equation}
		where $\mathcal{H}$ denotes the single-qubit Hilbert space, $\mathcal{D}(\mathcal{H})$ denotes its dual space, and $D(\rho \| \sigma) = \text{Tr}(\rho \log \rho - \rho \log \sigma)$ denotes the quantum relative entropy.
	}
	For strictly contractive unital channels, we will have $\mu_1 < 1$.
	Such noise model will strictly drive any single-qubit state $\rho$ towards a maximally mixed state, $D(\Lambda_1(\rho) \| \frac{I}{2}) \le \mu_1 D(\rho \| \frac{I}{2})$.

	This class of noise encompasses a broad range of unital channels. While their contractive property may initially seem abstract, we provide a direct analytical representation to clarify their behavior, as in the following proposition. The derivation of this proposition is available in Methods.

	\begin{proposition} \label{prop:noise-decompose}
		A single-qubit unital channel $\Lambda_1$ is strictly contractive \emph{if and only if} it can be decomposed as
	\begin{equation} \label{eq:analytic_expression}
		\Lambda_1 = \mathcal{V} \circ \mathcal{D} \circ \mathcal{U},
	\end{equation}
	where $\mathcal{U}$ and $\mathcal{V}$ are unitary channels, and $\mathcal{D}$ is a channel that contracts each Pauli matrix $\sigma_i$ by a factor $q_i < 1$, which corresponds to $\mathrm{Diag}(1, q_X, q_Y, q_Z)$ in the Pauli-Liouville representation.
	\end{proposition}

	A canonical example is the depolarizing channel, defined as $\Lambda_1(\rho) = (1-p)\rho + p \frac{I}{2}$, for which $q_X = q_Y = q_Z = 1 - p$, and the contraction rate is $\mu_1 = (1 - p)^2$~\cite{muller-hermes_relative_2016}.

	After gate operations with associated noise, computational-basis measurements yield the output. The noisy device model is formally defined as follows, with a visual representation in Fig.~\ref{fig:ndepol}.

    \begin{figure}[bthp!]
		\centering
		\begin{minipage}[t]{0.6\linewidth}
			\subfigure[]{
				\centering
				\begin{quantikz}[row sep=0.2cm,column sep=0.2cm]
					\lstick{$\ket{0}$} & \gate[4, nwires=3]{\mathcal{U}_1} & \phase{} & \gate[4, nwires=3]{\mathcal{U}_2} & \phase{} & \ \ldots\ \qw      & \gate[4, nwires=3]{\mathcal{U}_t} & \phase{} & \meter{} \\
					\lstick{$\ket{0}$} &                                   & \phase{} &                                   & \phase{} & \ \ldots\ \qw      &                                   & \phase{} & \meter{} \\
					\lstick{\vdots\ }  &                                   &          &                                   &          &  \lstick{\vdots\ } &                                   &          & \lstick{\vdots\ } \\
					\lstick{$\ket{0}$} &                                   & \phase{} &                                   & \phase{} & \ \ldots\ \qw      &                                   & \phase{} & \meter{} \\
				\end{quantikz}
				\label{fig:ndepol}
			}
			\subfigure[]{
				\includegraphics[width=0.85\linewidth]{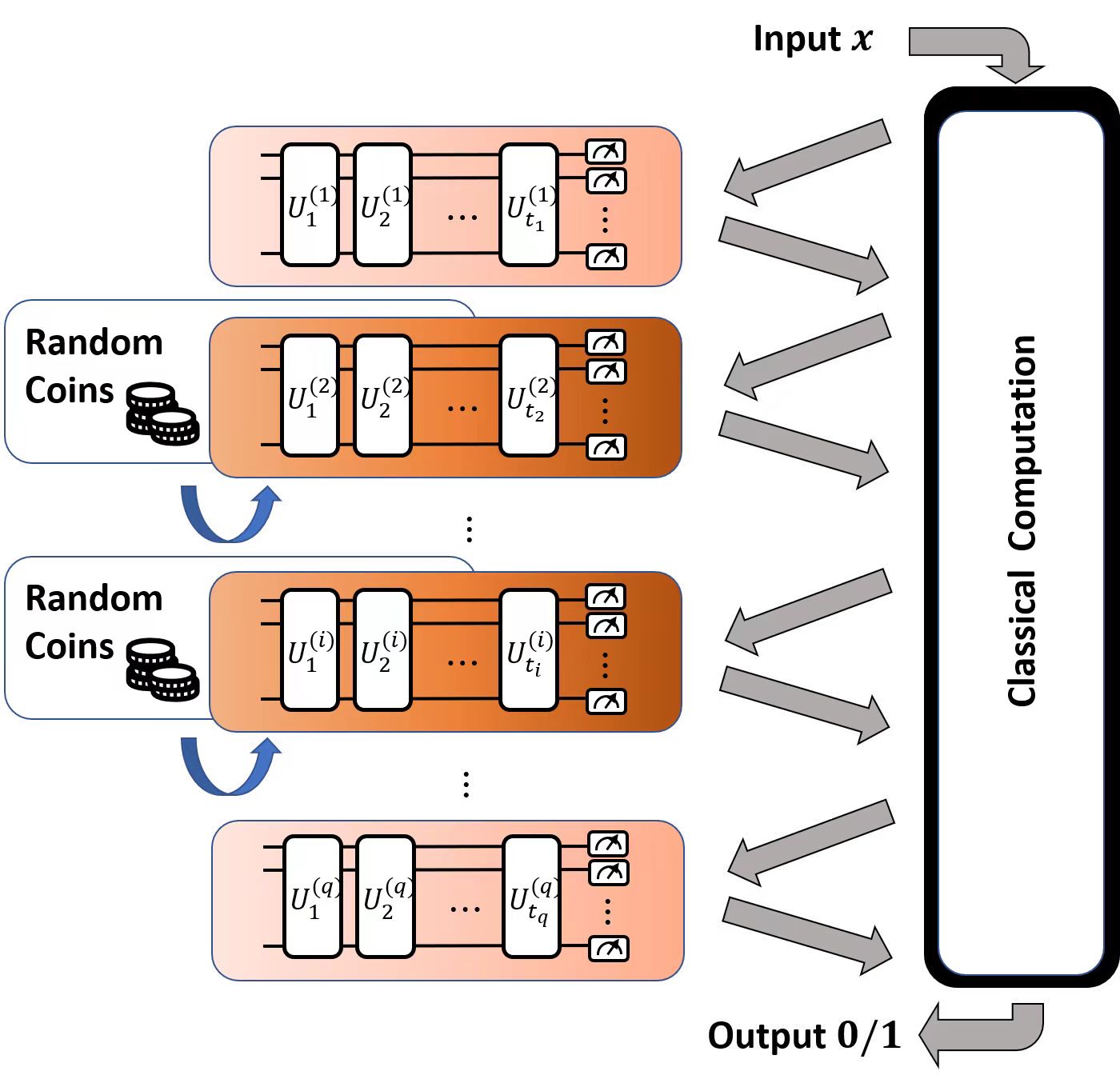}
				\label{fig:noisy-algo-coin}
			}
		\end{minipage}
		\begin{minipage}[t]{0.38\linewidth}
			\subfigure[]{
				\includegraphics[width=\linewidth]{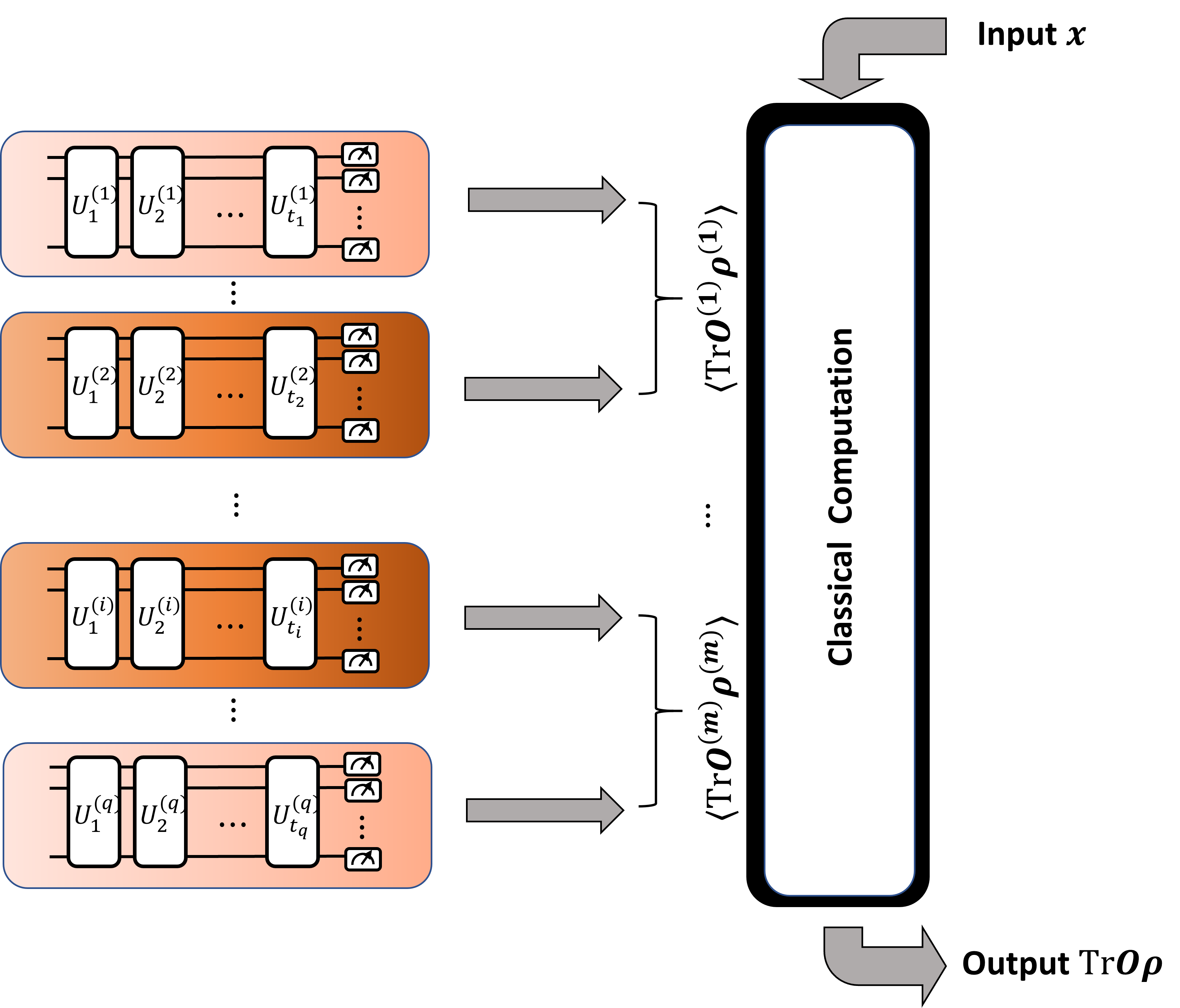}
				\label{fig:qem}
			}
			\subfigure[]{
				\includegraphics[width=\linewidth]{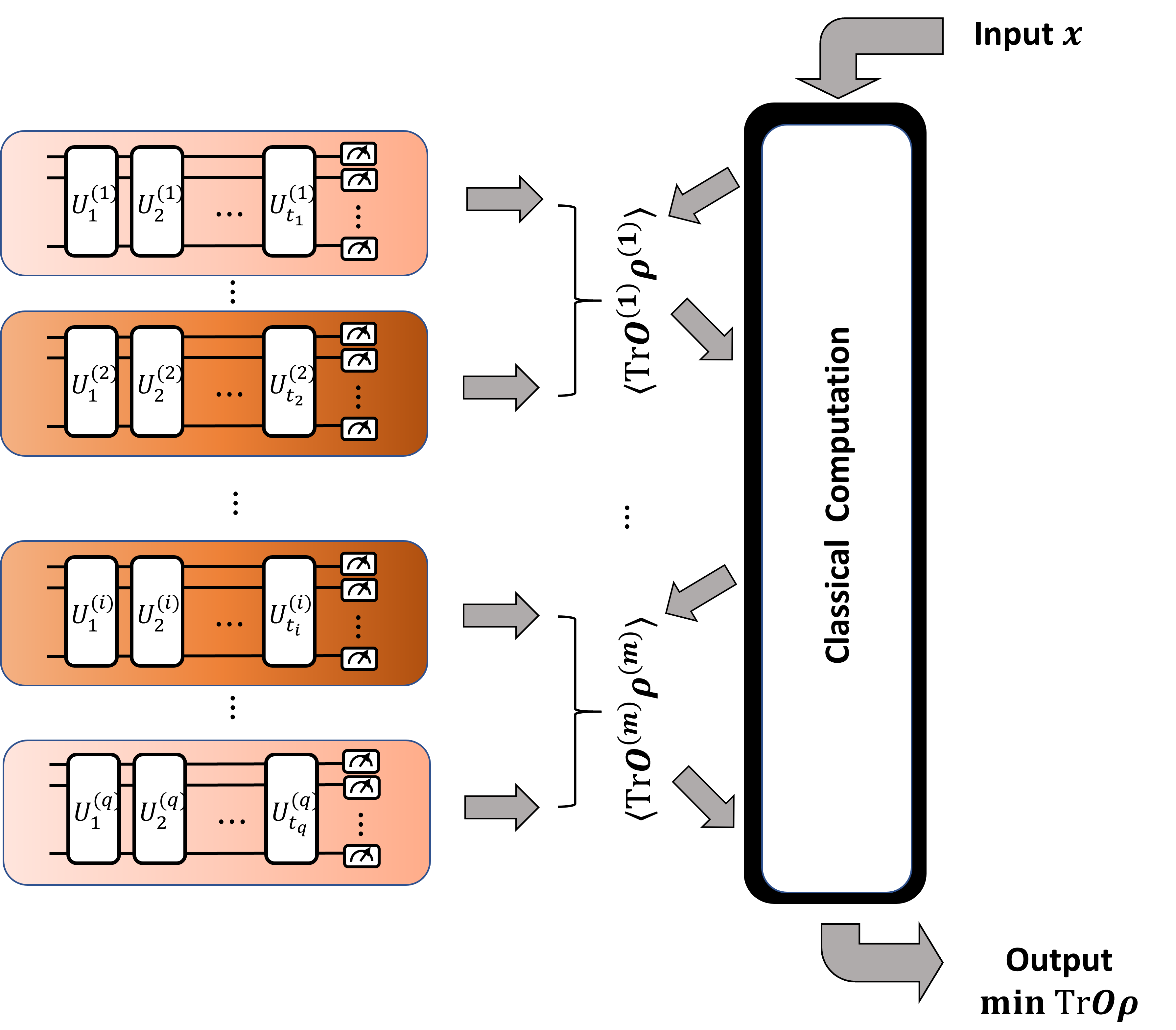}
				\label{fig:vqe}
			}
		\end{minipage}%
		\caption{\textbf{Illustration of a hybrid algorithm with noisy quantum devices and the replacement process from noisy quantum devices to random coins.} \textbf{a} Noisy quantum devices defined in Definition \ref{def:qc}. Blocks of $\mathcal{U}_i$'s are gate layers, and black dots are single-qubit noise channel $\Lambda_1$ that is assumed to be strictly contractive and unital. Computational basis measurements are performed at the end of the circuit to provide classical information output. \textbf{b} In a hybrid algorithm, the orange boxes on the left represent noisy quantum devices, in which a darker color represents noisy quantum devices with super-logarithmic depths, $\Omega(\log(nq))$. The classical computer on the right can control noisy quantum devices based on previous measurement outputs. After obtaining data from quantum devices and classical computation, the classical computer outputs 0 or 1. We use the boxes with coins to show that random coins can replace noisy quantum devices with super-logarithmic depths. After the replacement, the new classical algorithm can still give the same output bit. Thus, the noisy quantum devices with super-logarithmic depth do not provide any quantum advantages, as stated in Theorem~\ref{thm:complimit}. \textbf{c} Our results apply to single-copy quantum error mitigation, designed to suppress errors in expectation values, usually albeit with increased running time when single copies are fed. \textbf{d} Our results also apply to variational quantum algorithms that adapt circuits to update parameters during optimization. Our results suggest their limitations regarding noisy circuit depth, independent of classical processing designs.}
		\label{fig:hybrid}
	\end{figure}

	\begin{definition}[Noisy quantum devices] \label{def:qc}
		A noisy quantum device of $n$ qubits produces a quantum state at a depth of $t$,
		\begin{equation} \label{eq:rhot}
			\rho(t) = \Lambda \circ \mathcal{U}_{t} \circ \Lambda \circ \cdots \circ \Lambda \circ \mathcal{U}_{2} \circ \Lambda \circ \mathcal{U}_{1} (\ketbra{0}^{\otimes n}),
		\end{equation}
		where $\mathcal{U}_{i}$ denotes a gate layer and $\Lambda = \Lambda_1^{\otimes n}$ represents the noise channel and $\Lambda_1$ is the strictly contractive unital noise on each qubit. Each layer allows at most one gate operation per qubit. The measurement output $C_n(\rho(t))$ will then follow the distribution
		\begin{equation}
			\Pr [C_n(\rho(t)) = X] = \bra{X} \rho(t) \ket{X},
		\end{equation}
		with $X$ being an $n$-bit string and $\ket{X}$ its computational basis state.
	\end{definition}

	Our NISQ model prohibits using mid-circuit measurements and refreshing qubits. Such restrictions imply that the system's entropy cannot be reduced and will rise following each noise layer. We make the assumption intentionally for the NISQ era, as allowing these operations could facilitate fault-tolerant quantum computing below a noise threshold \cite{knill_resilient_1998, aharonov_fault-tolerant_2008}, a capability beyond the scope of the NISQ era \cite{chen_complexity_2023}.

	As a result, all stored information will inevitably be lost due to the convergence to the maximally mixed state with growing noisy circuit depth. This convergence is exponentially rapid as a function of circuit depth $t$, as reflected in the following lemma, whose proof is available in Methods.

	\begin{lemma} \label{lem:entropy}
		After $t$ layers of noisy circuits, the relative entropy between the state $\rho(t)$ and the maximally mixed state diminishes as $D(\rho(t) \| \sigma_0) \le n \mu^{t}$, where $\sigma_0 = {I}/{2^n}$ is the $n$-qubit maximally mixed state and $\mu < 1$ is the contractive rate.
	\end{lemma}

	\revise{
		Previous research has made significant findings of such contractive behavior in many noise models \cite{carbone_logarithmic_2015, kastoryano_quantum_2016, beigi_quantum_2020, bardet_modified_2021, capel_modified_2021}. Here, we extend the result to any unital contractive noise channel.

		Note that the contractive rate $\mu$ in this lemma can differ from $\mu_1$ in the single qubit case. While in special cases like depolarizing noise, we have $\mu = \mu_1$.
	}

	As highlighted above, this work focuses on quantum algorithms with the assistance of classical processing. We use the term ``hybrid algorithms'' to emphasize the integration of quantum and classical processing. Within a hybrid algorithm, a classical computer calls noisy quantum devices for measurement outcome bits. In each query, the quantum device responds by returning one bit of measurement results to the classical computer. Here, the classical computer is assumed to be noiseless with persistent memory. This process is depicted in Fig.~\ref{fig:noisy-algo-coin}. Following the intuition, strict formulations are provided in the Methods, using probabilistic Turing machines.

	Under this hybrid computing framework, we analyze the relation between the entropy of measurement results from all correlated queries $(X_1, X_2,\cdots, X_q)$ and the circuit depth $t$. Each query of a noisy quantum device produces outcome $X_i$ from computational measurements on the final gate layer, with exponentially diminishing information regarding the circuit depth $t$ \cite{aharonov_limitations_1996, kastoryano_quantum_2013}. However, extending this to multiple queries within a hybrid algorithm is not straightforward and necessitates carefully considering the correlations among the different queries. These correlations emerge from classical computers that can control subsequent quantum device operations based on previous measurements. A naive entropy analysis might suggest that such correlations could reduce the total entropy, potentially enabling hybrid algorithms to aggregate information and amplify quantum advantages.

	However, contrary to this presumption, we show that even if we consider query correlations, the aggregate information obtained from all measurement outputs $(X_1, X_2,\ldots, X_q)$ becomes exponentially small with increasing circuit depth $t$, as stated in the following lemma. The proof is made by breaking down the total entropy into a sequential sum of conditional entropies based on preceding queries, detailed in Methods.
	\begin{lemma}\label{lem:qX}
		Suppose the hybrid algorithm calls for $q$ times of circuit execution with a minimum circuit depth $t$. Denote $X_i$ as the measurement output from the $i$-th circuit execution on noisy $n$-qubit devices. The collection of outcomes $X_1, \ldots, X_q$ , containing $nq$ queried measurement bits, yields
		\begin{equation}
			S(X_1,\ldots,X_q) \ge  \big(1-\mu^{t}\big) nq,
		\end{equation}
		where $\mu$ is the contractive rate of the single-qubit noise channel.
	\end{lemma}

	This lemma considers correlations among queries to quantum devices, which arise from arbitrary classical processing and controls, thereby going beyond isolated quantum device analysis in existing studies \cite{aharonov_limitations_1996, kastoryano_quantum_2013}. The generality of this lemma makes our results applicable to generic hybrid algorithms.

	Based on the above entropy analysis, we identify the limitations of circuit depth for hybrid algorithms with noisy quantum devices. Our study reveals that the maximum depth limit to provide advantages scales as the logarithm of the number of queries to the noisy quantum devices, $nq$. We clarify this argument with the following theorem, whose formal version is available in the Methods.

	\begin{theorem}[informal version] \label{thm:complimit}
		Consider a hybrid algorithm operating within time $T(n)$ with $nq$ queried measurement bits from noisy quantum devices. If circuit depth $t \ge \Omega\left( \frac{\log (nq)}{|\log(\mu)|} \right)$, noisy devices will yield no quantum advantages. In this case, a classical algorithm, running within $O(T(n))$ time, can be used instead by replacing the noisy device queries with random coins from a uniform distribution.
	\end{theorem}

	The running time of an algorithm can be broken down into three parts: quantum circuit execution, queries to measurement outcomes, and classical processing. We can represent this as the following equation:
	\begin{equation}
		T(n) = c_1 t q + c_2 nq + T_c(n).
	\end{equation}
	Here, $t$ represents the circuit depth, $q$ represents the number of circuit executions, $T_c(n)$ represents classical processing time, and $c_1$ and $c_2$ are the constant times for executing a gate layer and querying a measurement outcome bit, respectively. We use functions of the qubit number $n$ to represent $T(n)$ and $T_c(n)$ as they typically depend on the number of qubits $n$, representing the algorithm input size. We assume that one query returns one bit of measurement outcomes, taking $c_2$ time for the classical computer. Therefore, we always have $nq \le \frac{1}{c_2}T(n)$ from the above equation.

	In previous studies on isolated noisy devices \cite{aharonov_limitations_1996, kastoryano_quantum_2016}, the maximal circuit depth depends on the number of qubits. In contrast, we highlight query times, i.e., $nq$, and the running time $T(n)$ as an important factor in hybrid quantum algorithms. The difference is attributed to the role of classical computing in hybrid algorithms. For the same noisy device, the longer the duration of a hybrid algorithm, the greater the opportunities it has to query the device and receive quantum advantages via the assistance of classical processing, e.g., error mitigation protocols \cite{temme_error_2017, endo_practical_2018, cai_quantum_2023}. Following this intuition, we present the proof in the Methods.

    The running time of efficient quantum algorithms is required to be at most polynomial growth concerning the number of qubits. In this case, where $nq \le \frac{1}{c_2} T(n) \le O(\text{poly}(n))$, noisy quantum devices with super-logarithmic depth cannot provide any advantage for such polynomial-time quantum algorithms. This limitation is unavoidable even with sophisticated classical processing or adaptive operations based on measurement outcomes in previous queries. As a result, we have established strict no-go results on algorithms with a super-logarithmic circuit depth. With noise, we cannot leverage adaptive controls to retrieve advantages of these algorithms, such as Shor's \cite{shor_polynomial-time_1997}, Grover's \cite{grover_fast_1996}, and the Harrow-Hassidim-Lloyd (HHL) algorithm \cite{harrow_quantum_2009}. In quantum simulation, our findings establish an upper bound of $O(\log(n))$ for the allowable evolution time. The reason behind this is that, generally, simulating a quantum system with an evolution time of $\tau$ requires a circuit depth proportional to $\tau$, which is known as the no-fast-forwarding theorem \cite{berry_efficient_2007,atia_fast-forwarding_2017,haah_quantum_2018}.

	Our findings apply to NISQ algorithms, where the running time and circuit depths are usually variable. In particular, we consider the algorithms that calculate the expectation values of observables, such as in variational quantum algorithms \cite{mcardle_quantum_2020, cerezo_variational_2021} and quantum error mitigation \cite{temme_error_2017, endo_practical_2018, cai_quantum_2023}. To estimate $m$ expectation values, we collect $Q_i$ computational-basis measurement outcomes for the $i$-th. Then, we use classical processing to estimate the statistical value of $f^{(i)} = \Tr [ \hat{O}^{(i)} \rho^{(i)} ]$. Here, $\hat{O}^{(i)}$ represents the $i$-th observable, and $\rho^{(i)}$ corresponds to the quantum state associated with it. Note that $\rho^{(i)}$'s may differ from each other, which are generated by various noisy circuits. Specifically, the structures of single-copy error mitigation and variational quantum algorithms are illustrated in Fig.~\ref{fig:qem} and \ref{fig:vqe}, respectively, which are special cases of our hybrid computing framework in Definition~\ref{def:qc}. Therefore, we can apply Theorem~\ref{thm:2_formal} and derive the following corollary.

	\begin{corollary}
		\label{cor:nisqalgo}
		For variational quantum algorithms and single-copy quantum error mitigation algorithms, if the circuit depth of noisy quantum devices is $t$, then the algorithm must have running time $T(n) = 2^{\Omega({|\log(\mu)| t})}$ to provide a quantum advantage.
	\end{corollary}

	The corollary unifies and strengthens important findings in NISQ algorithm limitations. For variational quantum algorithms, our results imply the issue of noise-induced barren plateaus. This issue arises at linear circuit depth due to noise, causing the gradients to exponentially vanish as the number of qubits $n$ increases and the optimization to fail \cite{wang_noise-induced_2021}. For quantum error mitigation, our results imply the exponential sampling costs associated with single-copy mitigation schemes \cite{takagi_fundamental_2022, takagi_universal_2023}. Note that a higher sampling overhead for error mitigation has been demonstrated in a fundamentally different noisy circuit model, where multiple layers of gates are executed between local depolarizing channels \cite{quek_exponentially_2023}. Hereby, we provide a unified view of previous algorithm-specific findings.

	For both aforementioned applications, we also strengthen the previous results by considering the general classical processing beyond the diagram in Fig.~\ref{fig:qem} and \ref{fig:vqe}. For example, Corollary~\ref {cor:nisqalgo} applies to hybrid enhancement to quantum variational algorithms \cite{endo_hybrid_2021, miao_neural-network-encoded_2024}, which are important algorithms but not considered in existing work \cite{stilck_franca_limitations_2021}.
	Our results also apply to learning-based error mitigation \cite{strikis_learning-based_2021}. This method uses classical processing on previous query outcomes to train a model for mitigating errors in subsequent rounds. The broader implication is that attempts to learn noisy behaviors cannot circumvent the fundamental limitations, as shown in Corollary~\ref {cor:nisqalgo}.

	\revise{
		The implications of our research further extend to sampling algorithms. Following the idea of Theorem~\ref{thm:complimit}, we show that samples from noisy devices are statistically indistinguishable from those of uniformly distributed random coins if circuit depth exceeds $\Omega(\log(nq))$. When the running time $T(n)$ is within $O(\text{poly}(n))$, it suggests that sampling advantages cannot be demonstrated in polynomial time for noisy circuit depth exceeding $\Omega(\log(n))$. The details are available in the Method section.
	}

	In real-world quantum devices, gates are often subjected to certain topologies, resulting in deeper circuits and further restraining their computational power under noise. For example, we can consider a 1D qubit array, where gates are restricted between the nearest neighbor qubits on a linear chain.

	We show that 1D noisy devices do not possess any super-polynomial computational advantages, regardless of their depth. To see this, we consider two depth regimes divided by $\log (n)$-scaling, respectively. In the $O(\log (n))$ depth regime, the pure-state dynamics of 1D chains are simulatable in polynomial time \cite{Shi2006TreeTensor, bravyi_classical_2021}. Here, we extend these results to the general cases of mixed states in noisy circuits. This generalizes previous findings to open systems, which are closer to reality. Further details are available in the Methods section.

	\begin{lemma}\label{lem:simulate_1d}
		The output distribution of 1D quantum devices, without noise or with any single-qubit noise, can be sampled with $O(2^{2t} n t)$ computational times on a classical computer, where $n$ is the qubit number, and $t$ is circuit depth.
	\end{lemma}

	In a hybrid algorithm, noisy devices are queried multiple times, as discussed in the above sections. The lemma indicates the simulatability of each single query with circuit depth $t=O(\log n)$. For the whole hybrid algorithm in polynomial running time, we can classically simulate each query and receive a polynomial overhead. Therefore, these devices cannot provide super-polynomial computational advantages in a hybrid algorithm. It is also important to note that this lemma applies to any single-qubit noise, even beyond the strictly contractive unital cases, as assumed in the above sections.

	Otherwise, if the noisy circuit is deeper, whose depth $t = \Omega(\log (n))$, Theorem~\ref{thm:complimit} suggests the absence of quantum advantages. Combining the two aspects, we exclude the super-polynomial advantages of 1D noisy quantum devices regardless of circuit depth, where the noise is strictly contractive and unital.

	\begin{theorem} \label{thm:1dcomplimit}
		One-dimensional noisy quantum devices that run in computational time $T(n)$ can be simulated by a classical algorithm with $T(n)^{\frac{2}{|\log(\mu)|} + 1}$ computational time, thus having no super-polynomial quantum advantages in a hybrid algorithm.
	\end{theorem}

	Our results stress the importance of connectivity in the NISQ diagram. The connectivity is required to be stronger than a 1D qubit array for the existence of super-polynomial advantages under noise.

	In the above sections, we have shown the computational limitations of noisy quantum devices, especially in the case of 1D connectivity. We will then delve into the connectivity of noisy devices from a more physical point of view. To this end, we analyze the entangling power in 1D and 2D qubit arrays.

	For the 1D qubit chain, we consider bipartite entanglement between two contiguous halves of the chain, denoted as $A$ and $\bar{A}$. A key observation is that the interaction of a qubit is localized in a region whose radius grows with the depth $t$. The physical picture is that the qubits interact within a light cone. We generalize this observation to entanglement spreading and derive an upper bound of the entanglement monotone $E$ between halves of the chain, $E(A:\bar{A}) \le t$.
	The generation of entanglement requires sufficient circuit depth. Similar bounds have been derived for the dynamics of pure states without noise \cite{eisert_entangling_2021, harrow_separation_2021}. Using local operation monotones and induction, we extend the entangling upper bound to the mixed-state case. This result is essential for analyzing noisy quantum devices.

	On the other hand, noisy quantum devices suffer from an exponential loss of information with increasing depth $t$, which also leads to exponentially rapid decay of maximal entanglement. Jointly, the two effects result in a logarithmic upper bound of entanglement generation in noisy quantum devices at arbitrary circuit depths, as stated in the following theorem. The proof of our results, including the following theorem, can be found in Methods.

	\begin{theorem}[Entanglement upper bound on 1D array] \label{the:1dent}
		For a contiguous half $A$ and the complement half $\bar{A}$ in an $n$-qubit noisy quantum device with a 1D connection topology, the quantum mutual information and hence the quantum relative entropy of entanglement are upper bounded by
		\begin{equation}
			E_{R}(A:\bar{A}) \le I(A : \bar{A}) \le 2 \max \left\{\frac{\log (\frac{n}{2})}{|\log (\mu)|}, 1 \right\},
		\end{equation}
		where $\mu$ is the contractive rate of the single-qubit noise channel.
	\end{theorem}

	Our findings have significant implications for preparing quantum states with large-scale entanglement by excluding the possibility of efficiently preparing any quantum state with a super-logarithmic entanglement scaling. This limitation extends to a wide range of scenarios, including high excitation \cite{huang_universal_2021} and thermalization in most quantum dynamics \cite{deutsch_thermodynamic_2010, deutsch_eigenstate_2018}. This is also related to limitations in quantum simulation, where entanglement plays an important role \cite{joshi_exploring_2023}.

	For 2D lattices, we consider qubits arranged in a square of side length $\sqrt{n}$ and show that maximal entanglement is $O(\sqrt{n} \log (n))$. For one- and two-dimensional lattices, we present the numerical upper bounds of entanglement for depolarizing at different strength levels in Fig.~\ref{fig:1dent}. After the number of qubits reaches a certain number related to the noise strength, the further growth of quantum entanglement in the noisy quantum device will be suppressed. For the 1D case, this will lead to an exponential cost of qubits required to scale up the system's entanglement further due to the logarithmic scaling of the upper bounds of entanglement. In the 2D case, a polynomial cost is also required.
	\begin{figure}[hbtp!]
		\centering
		\includegraphics[width=.7\linewidth]{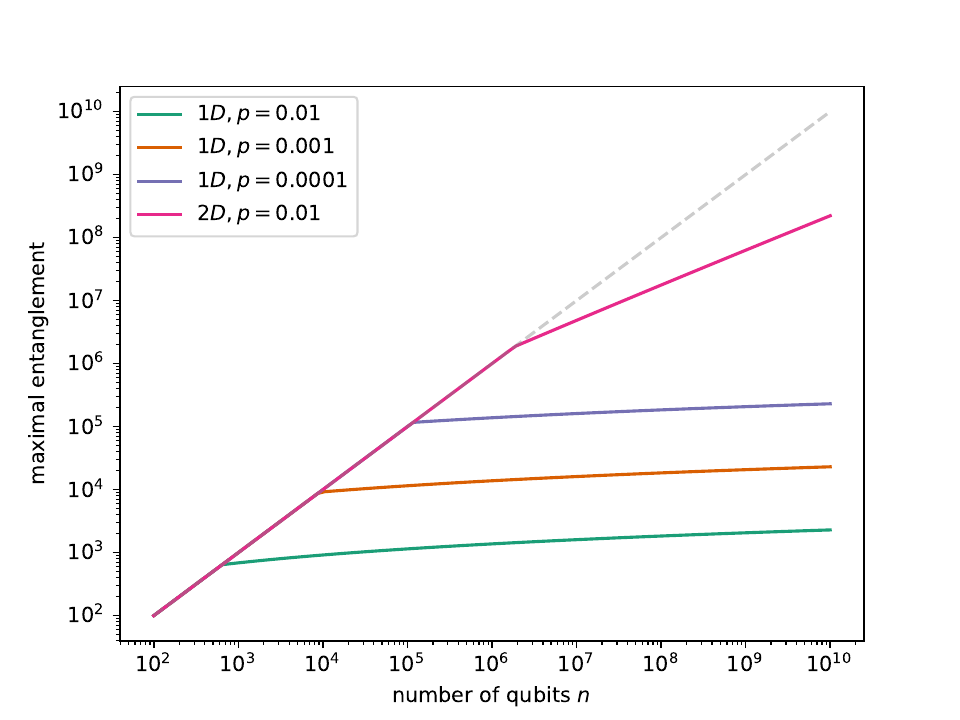}
		\caption{Limitations of entanglement generation, with varied number of qubits $n$ and single-qubit depolarizing noise strength $p$, which are taken as the minimal values between $n$ and the upper bounds in Theorem~\ref{the:1dent}. As for depolarizing noise, we have $\mu = \mu_1 = (1-p)^2$. Lines in different colors correspond to different noise strengths $p$, and one of the lines is for the case of a 2D qubit connection.} \label{fig:1dent}
	\end{figure}

	\section*{Discussion}

	The limitations on computational and entangling capabilities are inherent in our NISQ model. Future advancements in quantum computing hardware must strive to transcend the assumptions of this model and, therefore, be able to avoid the rapid convergence to the maximally mixed state. In the interim, strategies such as delaying the introduction of qubits into a noisy environment until the last possible moment or transforming the noise into more manageable forms may mitigate some of the issues \cite{wu_erasure_2022}. The ultimate solution to quantum errors can be achieved through resetting qubits or using mid-circuit measurements with feedforward actions to purify the system. These techniques are important steps toward quantum error correction, but they require a sufficient supply of qubits and low error rates below the error tolerance threshold. Therefore, exceeding the limitations outlined in our study is necessary to achieve fault-tolerant quantum computation during the current NISQ era.

	Future work will involve expanding our results to other types of noise. In this work, we focus on strictly contractive unital cases, such as depolarizing noise. However, the situation will be very different for the noise that does not always drive the state to the maximally mixed state, such as dephasing channels and the amplitude damping channel \cite{ben-or_quantum_2013}. For these noise models, as long as they are single-qubit noise, Lemma \ref{lem:simulate_1d} will still hold for any 1D qubit array, excluding superpolynomial advantages at a short depth. For higher-dimensional qubit arrays and other depth regimes, these noise models require further investigation.
	Besides single-qubit noise, NISQ devices also suffer from multi-qubit collective noise. Mathematically, the structure of a multi-qubit noise channel can be much more complicated due to the growth of dimension. We note that our results apply to multi-qubit depolarizing noise, whose contraction is similar to Lemma~\ref{lem:entropy}, as detailed in Methods. More general cases are to be further explored.

	Our results may also extend to limitations of other properties of noisy quantum devices, including quantum state complexity, topological order, magic, and quantum chaos. It will also be interesting to apply our generic results to more applications of quantum computers.

	\section*{Methods}

	First, we present the analytic expression of strictly contractive unital channels, which is given by Eq.~\eqref{eq:analytic_expression} in Proposition~\ref{prop:noise-decompose}.

	\begin{proof}[Proof of Proposition~\ref{prop:noise-decompose}]

		We first show that Eq.~\eqref{eq:analytic_expression} is a necessary condition. Consider the Bloch sphere representation of single-qubit states:
		\begin{equation}
			\rho = \frac{1}{2}\left(I + \bm{r} \cdot \bm{\sigma}\right),
		\end{equation}
		where $\bm{r} \in \mathbb{R}^3$ is a Bloch vector with $\abs{\bm{r}} \le 1$, and $\bm{\sigma} = (\sigma_X, \sigma_Y, \sigma_Z)^T$ is the vector of Pauli matrices. For any single-qubit channel $\Lambda_1$, there exist unitary operators $U$ and $V$ such that:
		\begin{equation}
			V^{\dagger} \Lambda_1\left(U^{\dagger} \rho U\right) V = \frac{1}{2}\left(I + (\bm{t} + \Sigma \bm{r}) \cdot \bm{\sigma}\right),
		\end{equation}
		where $\bm{t} \in \mathbb{R}^3$ and $\Sigma$ is a real $3 \times 3$ matrix. For unital channels, we have $\bm{t} = 0$. Moreover, $\Sigma$ can be taken as diagonal via a suitable choice of $U$ and $V$, with $\norm{\Sigma}_\infty \leq 1$ (where the norm is the operator infinity norm, i.e., the largest singular value). Thus,
		\begin{equation}
			V^{\dagger} \Lambda_1\left(U^{\dagger} \rho U\right) V = \frac{1}{2}\left(I + (\Sigma \bm{r}) \cdot \bm{\sigma}\right).
		\end{equation}
		Equivalently, we can express the channel as
		\begin{equation}
			\Lambda_1 = \mathcal{V} \circ \mathcal{D} \circ \mathcal{U},
		\end{equation}
		where $\mathcal{U}(\rho) = U \rho U^{\dagger}$ and $\mathcal{V}(\rho) = V \rho V^{\dagger}$ are unitary channels, and $\mathcal{D}$ is a diagonal channel whose Pauli-Liouville representation is given by the matrix
		\begin{equation}
			\begin{bmatrix}
				1 & \bm{0}^T  \\
				\bm{0} & \Sigma
			\end{bmatrix},
		\end{equation}
		where $\bm{0}=(0,0,0)^T$.

		We now show that strict contractivity implies $\|\Sigma\|_\infty < 1$. Recall that for a single-qubit state $\rho$ with Bloch vector $\bm{r}$, the relative entropy to the maximally mixed state is
		\begin{equation}
			D\left(\rho \| \frac{I}{2}\right) = 1 - S(\rho) = 1 - h\left(\frac{1 + \abs{\bm{r}}}{2}\right),
		\end{equation}
		where $h(x) = -x \log_2 x - (1 - x) \log_2 (1 - x)$ is the binary entropy function. Hence, $D\left(\rho \,\|\, \frac{I}{2}\right)$ is strictly increasing with $\abs{\bm{r}}$.
		Since unitary operations preserve the Bloch vector norm, the relative entropy is invariant under unitary conjugation. Now suppose, for contradiction, that $\|\Sigma\|_\infty = 1$. Then, there exists a state $\rho$ with $\bm{r} \neq 0$ such that $\Sigma \bm{r} = \bm{r}$. In this case,
		\begin{equation}
			\begin{split}
				D\left(\Lambda_1\left(U^{\dagger} \rho U\right) \,\|\, \frac{I}{2}\right)
				&= D\left(\mathcal{V} \circ \mathcal{D} \circ \mathcal{U}(U^{\dagger} \rho U) \,\|\, \frac{I}{2}\right) \\
				&= D\left(\mathcal{V} \circ \mathcal{D} (\rho) \,\|\, \frac{I}{2}\right) \\
				&= D\left(\mathcal{V}(\rho) \,\|\, \frac{I}{2}\right) \\
				&= D\left(U^{\dagger} \rho U \,\|\, \frac{I}{2}\right),
			\end{split}
		\end{equation}
		where the third line uses $\Sigma \bm{r} = \bm{r}$ (i.e., $\mathcal{D}(\rho) = \rho$), and the fourth line follows from the unitary invariance of the relative entropy. This contradicts the assumption that $\Lambda_1$ is strictly contractive, since $D$ remains unchanged. Therefore, $\|\Sigma\|_\infty < 1$.

		We now show that Eq.~\eqref{eq:analytic_expression} is also a sufficient condition for $\Lambda_1$ to be strictly contractive. Suppose $\Sigma = \mathrm{Diag}(q_X, q_Y, q_Z)$ is the diagonal matrix appearing in the Pauli-Liouville representation, where $q_X, q_Y, q_Z < 1$. Without loss of generality, assume that $q_X \le q_Y \le q_Z < 1$.
		We decompose $\Sigma$ as the product of two diagonal matrices: $\Sigma = \Sigma_1 \Sigma_2$, where
		\begin{equation}
			\Sigma_1 =  \mathrm{Diag}\left(\frac{q_X}{q_Z}, \frac{q_Y}{q_Z}, 1\right), \quad
			\Sigma_2 = \mathrm{Diag}(q_Z, q_Z, q_Z),
		\end{equation}
		The map corresponding to $\Sigma_1$ is a unital (but not necessarily completely positive) linear map $\mathcal{C}$.
		The map corresponding to $\Sigma_2$ is the single-qubit depolarizing channel $\mathcal{D}_p$ with depolarizing strength $p = 1 - q_Z$. Then, we can decompose the channel as:
		\begin{equation}
			\Lambda_1 = \mathcal{V} \circ \mathcal{C} \circ \mathcal{D}_p \circ \mathcal{U}.
		\end{equation}

		Now consider any single-qubit quantum state $\rho \neq \frac{I}{2}$. We have:
		\begin{equation}
			\begin{split}
				D\left(\Lambda_1(U^{\dagger}\rho U) \,\|\, \frac{I}{2}\right)
				&= D\left(\mathcal{V} \circ \mathcal{C} \circ \mathcal{D}_p \circ \mathcal{U}(U^{\dagger}\rho U) \,\|\, \frac{I}{2}\right) \\
				&= D\left(\mathcal{C} \circ \mathcal{D}_p (\rho) \,\|\, \frac{I}{2}\right) \\
				&\leq (1-p)^2 D\left(\mathcal{C}(\rho) \,\|\, \frac{I}{2}\right) \\
				&\le (1-p)^2 D\left(\rho \,\|\, \frac{I}{2}\right)\\
				&= (1-p)^2 D\left(U^{\dagger}\rho U \,\|\, \frac{I}{2}\right)
			\end{split}
		\end{equation}
		The second line uses the unitary invariance of relative entropy. The third line uses the contractivity of the depolarizing channel $\mathcal{D}_p$ \cite{muller-hermes_relative_2016}. The inequality in the fourth line follows because $\mathcal{C}$ does not increase the norm of the Bloch vector, and the relative entropy $D(\rho \,\|\, \frac{I}{2})$ is strictly increasing in $\abs{\bm{r}}$.

		Therefore, $D\left(\Lambda_1(\rho) \,\|\, \frac{I}{2}\right) \le (1-p)^2 D\left(\rho \,\|\, \frac{I}{2}\right)$ for all $\rho \neq \frac{I}{2}$, and thus $\Lambda_1$ is a strictly contractive unital channel. This concludes the proof.
	\end{proof}

	Based on the noise model described above, we derive the convergence to maximally mixed states in noisy quantum devices under strictly contractive unital noise.

	\begin{proof}[Proof of Lemma~\ref{lem:entropy}]
		We employ entropy contraction and tensorization techniques in Gao et al.'s previous work \cite{gao_complete_2025} and adapt them to our notation. To extend the contraction rate $\mu$ from single-qubit to multi-qubit channels, we define:
		\begin{equation}
			\mu(\Phi) \coloneqq \frac{D\big(\Phi(\rho) \,\|\, E_{\Phi}(\rho)\big)}{D\big(\rho \,\|\, E_{\Phi}(\rho)\big)},
		\end{equation}
		where $E_{\Phi}(\rho)$ is the fixed point of the channel in the limit of infinite applications of the operator $\Phi^\dagger \Phi$. That is, $E_{\Phi}(\rho) = \lim_{k\rightarrow \infty} (\Phi^\dagger \Phi)^k$. For a strictly contractive unital channel $\Lambda_1$, we have $E_{\Lambda_1}(\rho) = \frac{I}{2}$, and for the identity channel $\mathrm{id}_n$ on $n$ qubits, we have $E_{\mathrm{id}_n} = \mathrm{id}_n$.

		To upper-bound the contraction rate over larger systems, we introduce the \emph{complete entropy contraction coefficient} $\alpha_C$, defined as
		\begin{equation}
			\alpha_C(\Phi) \coloneqq \sup_{m} \mu\big(\Phi \otimes \mathrm{id}_m\big).
		\end{equation}
		Corollary 5.2 in Gao et al.'s previous work \cite{gao_complete_2025} provides an upper bound:
		\begin{equation} \label{eq:alpha}
			\alpha_C(\Lambda_1) \leq 1 - \frac{1}{2k_{\mathrm{cb}}(\Lambda_1)},
		\end{equation}
		where $k_{\mathrm{cb}}$ is the completely bounded return time, defined as
		\begin{equation}
			k_{\mathrm{cb}}(\Phi) = \inf \left\{ k \in \mathbb{N}^+ \,\middle|\, 0.9\,E_{\Phi} \leq_{\mathrm{cp}} (\Phi^\dagger \Phi)^k \leq_{\mathrm{cp}} 1.1\,E_{\Phi} \right\},
		\end{equation}
		and $\leq_{\mathrm{cp}}$ denotes ordering under complete positivity.
		Since $\Lambda_1$ is a strictly contractive unital channel acting on a finite-dimensional Hilbert space, a finite $k$ always exists such that the above inequality holds. Consequently, $k_{\mathrm{cb}}(\Lambda_1)$ is finite and $\alpha_C(\Lambda_1) < 1$ from Eq.~\eqref{eq:alpha}.

		We next use the tensorization property of $\alpha_C$, which states that for any two quantum channels $\Phi_1$ and $\Phi_2$, we have
		\begin{equation}
			\alpha_C(\Phi_1 \otimes \Phi_2) = \max \big\{ \alpha_C(\Phi_1), \alpha_C(\Phi_2) \big\}.
		\end{equation}
		Applying this recursively, we obtain that the complete entropy contraction coefficient of the $n$-qubit noise channel $\Lambda = \Lambda_1^{\otimes n}$ is
		\begin{equation}
			\alpha_C(\Lambda) = \alpha_C(\Lambda_1) < 1.
		\end{equation}

		By definition, the contraction coefficient $\mu$ satisfies $\mu \leq \alpha_C(\Lambda) < 1$. Thus, for any quantum state $\rho$,
		\begin{equation}
			D\big(\Lambda(\rho) \| \sigma_0\big) \leq \mu  D\big(\rho \| \sigma_0\big),
		\end{equation}
		where $\sigma_0 = \frac{I}{2^n}$ is the maximally mixed state on $n$ qubits.
		Furthermore, unitary operations in each layer preserve the relative entropy due to its unitary invariance. Hence, they do not affect the contraction.
		Iterating this inequality over $t$ layers of the noisy circuit yields the desired convergence bound stated in Lemma~\ref{lem:entropy}.
	\end{proof}

	To strictly characterize the computational capacity, we propose the formalism of hybrid quantum algorithms, which combine quantum computing with classical processing, as illustrated in Fig.~\ref{fig:noisy-algo-coin}. The scope of the algorithms we discuss includes decision problems and sampling problems. We adopt the notion of languages and the probabilistic Turing machine (PTM), which are standard terms in theoretical computer science. For decision problems, the term \textit{language} refers to a general problem where a determined answer is required as an answer to the problem.
	\begin{definition}[Language]\label{def:lg}
		A language $L$ is a subset of $\{0,1\}^{*}$. For $x \in \{0,1\}^*$, $L(x)$ is defined as $L(x) = [x \in L]$.
	\end{definition}

	A probabilistic Turing machine (PTM) is a classical algorithm that can compute a language with a probability of giving the correct answer greater than $\frac{2}{3}$. It is important to note that the classical algorithm may be randomized. Regarding the randomness in algorithms, a random variable $Y$ is introduced to represent the choice of Turing machines.

	\begin{definition}[Probabilistic Turing machine] \label{def:ptm}
		A probabilistic Turing machine, denoted by $M$, decides a language $L$ in time $T(n)$ if, for any string $x$, $M$ halts within $T(|x|)$ steps and the probability of $M$ outputting the correct answer for $x$ is at least $\frac{2}{3}$ when given a random string $Y$.

		Here, $Y$ is a random choice of Turing machines, which follows a uniform distribution over bit strings of length $T(|x|)$. When the input $x$ and random choices $Y$ are given, $M(x, Y)$ is the output of the chosen Turing machine.
	\end{definition}

	Hybrid algorithms can interact with noisy quantum devices, as shown in Fig.~\ref{fig:noisy-algo-coin}.  We formulate a hybrid algorithm as a $\PTM$ that queries noisy quantum devices, receives measurement outcomes in the form of bit strings, and performs classical processing on these outcomes. As mentioned in the main text, $T(n)$ denotes its running time, $t$ denotes its circuit depth, and $q$ denotes the circuit execution times.

	Then, we provide proof of Lemma \ref{lem:qX} with a slight generalization. We recall the definition of conditional entropy.
	\begin{definition}[Conditional Entropy]
		For two discrete random variables $X$ and $Y$, the conditional entropy is defined as
		\begin{equation}
			S(X|Y) = -\sum_{x,y} P(X=x,Y=y) \log P(X=x|Y=y).
		\end{equation}
	\end{definition}\label{def:cond_entropy}

	The conditional entropy is useful in our proof. We briefly prove two entropy equalities with direct calculation for clarity and completeness. Firstly,
	\begin{equation} \label{eq:cond_entropy}
		\begin{split}
			S(X|Y)& = - \sum_y P(Y=y) \sum_x P(X=x|Y=y) \log P(X=x|Y=y) \\
			&= \sum_y P(Y=y) S(X|Y=y),
		\end{split}
	\end{equation}
	in which $S(X|Y=y)$ is the entropy of marginal distribution of $(X,Y)$ when $Y$ is fixed value $y$. Secondly,
	\begin{equation}\label{eq:cond_entropy2}
		\begin{split}
			S(XY) &= -\sum_{x,y} P(X=x,Y=y) \log P(X=x,Y=y) \\
			&=  -\sum_{x,y} P(X=x,Y=y) [\log P(X=x|Y=y) + \log P(Y=y)] \\
			&=  -\sum_{x,y} P(X=x,Y=y) \log P(Y=y) - \sum_{x,y} P(X=x,Y=y) \log P(X=x|Y=y) \\
			&=  -\sum_{y} P(Y=y) \log P(Y=y) - \sum_{x,y} P(X=x,Y=y) \log P(X=x|Y=y) \\
			&= S(Y) + S(X|Y).
		\end{split}
	\end{equation}

	After introducing the notion of conditional entropy, we provide proof of Lemma \ref{lem:qX}.
	Here, we consider a slightly more general case, where the hybrid algorithm can query quantum devices with different numbers of qubits $n_i$.
	\begin{lemma}[A generalized version of Lemma \ref{lem:qX}] \label{lem:qXapp}
		Suppose we use noisy quantum devices of depth at least $t$ for $q$ times. Let $X_i = C_{n_i}(\rho_i)$ denotes the measurement result of the $i$-th quantum circuit. Here, $\rho_i = \Phi^{(i)}(\ket{0}\bra{0}^{\otimes n_i})$, $\Phi^{(i)} = \Lambda \circ \mathcal{U}_{t_i}^{(i)} \circ \Lambda \circ \cdots \circ \Lambda \circ \mathcal{U}_{2}^{(i)} \circ \Lambda \circ \mathcal{U}_{1}^{(i)}$ denotes the quantum channel as a whole process combining all gates and noise in sequential order, $t_i \ge t$, $\mathcal{U}_j^{(i)}$ is an arbitrary quantum channel. We obtain $q$ random variables $X_1, \dots, X_q$. Then $S(X_1,\cdots,X_q) \ge (\sum_{i=1}^q n_i) (1-\zeta)$, in which $\zeta = \mu^{t}$.
	\end{lemma}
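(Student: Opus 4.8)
The plan is to avoid subadditivity entirely---which, as the main text stresses, only gives the wrong-direction bound $S(X_1,\dots,X_q)\le\sum_i S(X_i)$ and leaves open the worry that correlations concentrate the joint law---and instead expand the joint Shannon entropy of the classical outcomes by the chain rule,
\begin{equation}
    S(X_1,\dots,X_q)=\sum_{i=1}^q S(X_i\mid X_1,\dots,X_{i-1}).
\end{equation}
I would then bound each conditional term below by $n_i(1-\epsilon)$, after which summing yields the claim immediately. Throughout, $S$ on the classical variables denotes Shannon entropy of the corresponding distribution.

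The key observation for the per-query bound is exactly how adaptivity enters. Writing $X_{<i}=(X_1,\dots,X_{i-1})$ and expanding
\begin{equation}
    S(X_i\mid X_{<i})=\sum_{x_{<i}}\Pr[X_{<i}=x_{<i}]\,S(X_i\mid X_{<i}=x_{<i}),
\end{equation}
I note that fixing a realization $x_{<i}$ fixes the classical computer's choice of circuit, hence the channel $\Phi^{(i)}$---but it never alters the input, which by Definition \ref{def:qcapp} is always the fresh state $\ket{0}^{\otimes n_i}$. Therefore the conditional law of $X_i$ is precisely the computational-basis measurement statistics of a genuine depth-$\ge t$ noisy state $\rho_i=\Phi^{(i)}(\ketbra{0}^{\otimes n_i})$, to which the single-shot machinery applies verbatim.

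For that single-shot bound I would first pass from the classical measurement entropy to the von Neumann entropy of the pre-measurement state. Since the computational-basis measurement distribution of $\rho_i$ coincides with the spectrum of the fully dephased state $\Delta(\rho_i)$, and $\Delta$ is unital with $\Delta(\sigma_0)=\sigma_0$, data processing of the relative entropy gives
\begin{equation}
    n_i-S(X_i\mid X_{<i}=x_{<i})=D\big(\Delta(\rho_i)\,\|\,\sigma_0\big)\le D(\rho_i\,\|\,\sigma_0).
\end{equation}
Invoking the convergence bound of Lemma \ref{lem:convergenceinfo}, together with $t_i\ge t$ and $0<1-p<1$ so that $(1-p)^{t_i}\le(1-p)^t=\epsilon$, yields $D(\rho_i\,\|\,\sigma_0)\le n_i(1-p)^{t_i}\le n_i\epsilon$, hence $S(X_i\mid X_{<i}=x_{<i})\ge n_i(1-\epsilon)$. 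This bound is uniform over $x_{<i}$, so averaging gives $S(X_i\mid X_{<i})\ge n_i(1-\epsilon)$ and the chain rule completes the proof.

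The step I expect to be the crux is the middle one: recognizing that conditioning on past outcomes fixes the adaptively chosen circuit yet leaves the input state pristine, so the per-query entropy floor survives intact under conditioning. This is exactly what defeats the concern that classical processing could exploit correlations to reduce the total entropy---correlations enter only through the choice of $\Phi^{(i)}$, and that choice is powerless against a guaranteed lower bound on every conditional entropy.
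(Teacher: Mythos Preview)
Your proposal is correct and follows essentially the same route as the paper's own proof: chain-rule expansion $S(X_1,\dots,X_q)=\sum_i S(X_i\mid X_{<i})$, the observation that conditioning on past outcomes merely fixes the adaptively chosen channel $\Phi^{(i)}$ while the input remains $\ket{0}^{\otimes n_i}$, and then the single-shot bound via dephasing plus Lemma~\ref{lem:convergenceinfo}. The only cosmetic difference is that you phrase the dephasing step as data processing of $D(\cdot\|\sigma_0)$ whereas the paper states it directly as $S(\Delta(\rho))\ge S(\rho)$; these are equivalent since $D(\rho\|\sigma_0)=n-S(\rho)$.
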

	\begin{proof}
		By Lemma \ref{lem:entropy}, for any quantum channel $\mathcal{U}^{(i)}_j$,
		\begin{equation}
			S(\Phi^{(i)}(\ket{0}\bra{0}^{\otimes n_i})) \ge n_i (1- \zeta),
		\end{equation}
		with $\zeta = \mu^{t}$.

		According to our assumption, measurements are noiseless and performed on a computational basis. The resulting distribution of measurement outcome $X_i$ is on the diagonal of $\Delta(\Phi^{(i)}(\rho))$, where $\Delta$ denotes the dephasing channel for the computational basis. The entropy of the dephased state is the Shannon entropy of random strings $X_i$,
		\begin{equation}
			S(X_i) = S(\Delta(\Phi_t(\rho))).
		\end{equation}

		Note that $S(\Delta(\rho)) \ge S(\rho)$ for an arbitrary quantum state $\rho$ from the quantum data processing inequality. Thus,
		\begin{equation}\label{eq:ent}
			S(X_i) = S(\Delta(\Phi_t(\rho))) \ge S(\Phi^{(i)}(\rho)) \ge n_i (1- \zeta).
		\end{equation}

		Note that equation Eq.~\eqref{eq:ent} holds regardless of channels $\{\mathcal{U}^{(i)}_j\}_{j = 1,2,\cdots,t_i}$ implemented in the noisy quantum device. Although $X_i$ may depend on $X_1, \dots, X_{i-1}$, Eq.~\eqref{eq:ent} holds regardless of previous measurement outcome. In other words, for all $1 \le i \le q$ , $1 \le j < i$ and $x_j \in \{0,1\}^{n_j}$, we have
		\begin{equation}
			S(X_i|X_1=x_1,X_2=x_2,\cdots,X_{i-1}=x_{i-1}) \ge n_i (1-\zeta).
		\end{equation}

		Then we sum over all previous measurement outcomes $x_1,\cdots,x_{i-1}$ and use Eq.~\eqref{eq:cond_entropy},
		\begin{equation}\label{eq:total_entropy}
			\begin{split}
				S(X_i|X_1,X_2,\cdots,X_{i-1}) &= \sum_{x_1,\cdots,x_{i-1}}P(X_1=x_1,X_2=x_2,\cdots,X_{i-1}=x_{i-1}) S(X|X_1=x_1,X_2=x_2,\cdots,X_{i-1}=x_{i-1})\\
				& \ge \sum_{x_1,\cdots,x_{i-1}}P(X_1=x_1,X_2=x_2,\cdots,X_{i-1}=x_{i-1}) n_i(1-\zeta)   \\
				& \ge n_i(1-\zeta).
			\end{split}
		\end{equation}
		By Eq.~\eqref{eq:cond_entropy2},
		\begin{equation}
			\begin{split}
				S(X_1,X_2,\cdots,X_q) &= S(X_q|X_1,X_2,\cdots,X_{q-1}) + S(X_1,X_2,\cdots,X_{q-1}) \\
				&= S(X_q|X_1,X_2,\cdots,X_{q-1}) + S(X_{q-1}|X_1,X_2,\cdots,X_{q-2})  +  S(X_1,X_2,\cdots,X_{q-2})\\
				&= \sum_{i=1}^q S(X_i | X_1,X_2\cdots,X_{i-1}).
			\end{split}
		\end{equation}
		Combined with \eqref{eq:total_entropy},
		\begin{equation}
			S(X_1,X_2,\cdots,X_q)\ge \sum_{i=1}^q n_i(1-\zeta).
		\end{equation}
	\end{proof}

	Considering the case where all devices have the same number of qubits $n_i = n$ and the number of queried measurement bits $Q = \sum_{i=1}^{q} n_i = qn$, we will immediately obtain Lemma \ref{lem:qX}.

	Based on the above lemma, we analyze the limitations of hybrid algorithms with noisy quantum devices, leading to Theorem~\ref{thm:complimit}, as formally stated below.

	\begin{manualtheorem}{1}[formal version] \label{thm:2_formal}
		Consider a hybrid algorithm $\cA$ that decides a decision problem $L$ with $nq$ queried measurement outcome bits in running time $T(n)$. A classical algorithm $M$ exists that decides $L$ in time $O(T(n))$ if $t \geq \frac{1}{|\log(\mu)|} (\log(nq) + 5)$, where $\mu$ is the contractive rate of the unital, strictly contractive noise channel. Here, $M$ can be constructed by replacing noisy quantum devices with random coins from a uniform distribution.
	\end{manualtheorem}

	\begin{proof}
		For convenience, denote the depth upper limit in Theorem~\ref{thm:complimit} by $t^\star = \frac{1}{ |\log(\mu)|} (\log(Q) + 5)$. As stated in the theorem, we will consider the case where $t \ge t^\star$.

		Let $Y$ denote the random strings in $\cA$, including the string from noisy quantum devices and random inputs for the probabilistic Turing machines. Let $W_1$ denote $Y$'s substring from noisy quantum devices, namely, $W_1 = Y_{\mathcal{S}}$, where $\mathcal{S}$ is the indices that correspond to noisy quantum devices queries.

		Now, consider a construction of a classical algorithm, i.e., PTM $M$. Replace $W_1$ in $Y$ with a random string $W_2$ from a uniform distribution, and denote the new string as $Z$, such that $W_2 = Z_{\mathcal{S}}$.
		\revise{
		By Lemma \ref{lem:qX},
		\begin{equation}\label{eq:W1W2}
			D(W_1\|W_2)  \le  Q \zeta,
		\end{equation}
		in which $\zeta = \mu^{t}$.
		}
		Here, $Y, Z$ can be regarded as generated by passing $W_1$, $W_2$ through the same channel $\Gamma$, that is, $Y = \Gamma(W_1), Z = \Gamma(W_2)$. The channel $\Gamma$ represents the transfer from quantum measurement outcomes to the rest of classical random strings, determined by classical processing and controls. Then, by the data processing inequality, we have
		\begin{equation} \label{eq:D_yz}
			D(Y\|Z) = D(\Gamma(W_1)\|\Gamma(W_2)) \le D(W_1\|W_2) \le Q \zeta \le \frac{1}{32}.
		\end{equation}
		The first inequality is the data processing inequality; the second is Eq.~\eqref{eq:W1W2}; the third is from the assumption $t \geq t^\star$. By construction, $\cA$ and $M$ has same classical processing structure, that is, $\forall x \in \{0,1\}^n,y \in \{0,1\}^{T(n)}, M(x,y) = M'(x,y)$. Then we have the following equation:
		\begin{equation}
			D(\cA(x,Y)\|M(x,Z)) \le D(Y\|Z) \le \frac{1}{32}.
		\end{equation}
		The first inequality is from the data processing inequality; the second is from Eq.~\eqref{eq:D_yz}. Pinsker's inequality suggests
		\begin{equation}
			\|\cA(x,Y)-M(x,Z)\|_1 \le \sqrt{2D(\cA(x,Y)\|M(x,Z))} \le \frac{1}{4}.
		\end{equation}

		We have the probability that the classical algorithm solves the decision problem $L$,
		\begin{equation}\label{eq:replace_prob}
			\Pr[M(x,Z) = L(x)] = \Pr[M(x,Z) = \cA(x,Y)] \ge \frac{2}{3} - \frac{1}{8} = \frac{13}{24}.
		\end{equation}
		Then, we can repeat the $\PTM$ $M$ a constant number of times and then take a majority vote on the output results to obtain the final decision. By doing so, we can ensure that the probability of the output being equal to $L(x)$ is at least $\frac{2}{3}$. Therefore, $L$ can be decided by a $\PTM$ without noisy quantum devices in $O(T(n))$ time.
	\end{proof}

	This leads to the absence of quantum advantages of many existing algorithms, including the examples mentioned in the main text, such as Shor's, Grover's, and HHL algorithms. For Shor's and HHL algorithms, the algorithm's output is not one bit as assumed in our decision problem formalism. In the next section, we will show that they can still be reduced to a decision problem, therefore, within the scope of Theorem~\ref{thm:complimit}.

	Furthermore, we extend our results to quantum sampling, which has been demonstrated in experiments \cite{arute_quantum_2019, wu_strong_2021}. In sampling problems, the quest is to obtain $nq$ measurement output bits, i.e., samples, from noisy quantum devices. Here, we consider a distinguisher that tries to tell whether samples are from noisy quantum devices or a uniform distribution. Such a task is a decision problem; therefore, according to Definition~\ref{def:ptm}, we require the distinguisher to succeed with probability at least $\frac{2}{3}$. We show that such a distinguisher will fail if circuit depths exceed the same depth upper limits in Theorem~\ref{thm:2_formal}.

	\begin{theorem} \label{thm:sample}
		Consider $nq$ samples generated from noisy quantum devices. If circuit depth $t \geq \frac{1}{|\log(\mu)|} (\log(nq) + 5)$, then the obtained samples are statistically indistinguishable from those from a uniform distribution. Namely, any distinguisher cannot tell the difference with probability at least $\frac{2}{3}$.
	\end{theorem}

	\begin{proof}
		Consider the distinguisher, mentioned in the theorem, as a hybrid algorithm $\cA$ with no input $x$ and only queries to noisy quantum devices. We require $\cA(\emptyset, Y)$ to tell whether it actually queries noisy quantum devices or just random coins from a uniform distribution. Specifically, when given quantum devices, it should return a ``true'' with probability at least $\frac{2}{3}$; when given random coins, it should always return a ``false.''

		Suppose it can indeed solve this decision problem. Now, we replace noisy quantum devices with random coins, equivalent to $M(\emptyset, Z)$. From Eq.~\eqref{eq:replace_prob}, with probability at least $\frac{13}{24}$, the distinguisher will return the same answer in the two cases, $\cA(\emptyset, Y)$ and $M(\emptyset, Z)$. Note that when given random coins, it should return a ``false.'' Therefore, when given noisy quantum devices, it will answer ``false'' with probability at least $\frac{13}{24}$, which is a wrong answer. Therefore, it cannot succeed with at least $\frac{2}{3}$ probability, i.e., failing the decision problem.
	\end{proof}

	In what follows, we present the equivalence between some typical sorts of problems and a decision problem. Here, equivalence means that solving one problem in polynomial time implies being able to solve the other problem in polynomial time.

	\begin{proposition}[Factorizing]\label{prop:Shor}
		The following two problems are equivalent:
		\begin{enumerate}
			\item {Given $n$, output the smallest non-trivial factor of $n$.}
			\item {Given $(n,k)$, determine if there exist a non-trivial factor of $n$ that is less than $k$.}
		\end{enumerate}
	\end{proposition}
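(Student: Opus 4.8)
The plan is to establish the claimed equivalence by exhibiting two polynomial-time reductions between the search problem (1) and the decision problem (2), one in each direction; recall that ``non-trivial factor of $n$'' means a divisor $d$ with $1 < d < n$. The easy direction is to solve (2) given an oracle for (1). On input $(n,k)$, I would make a single call to the oracle for (1) to obtain the smallest non-trivial factor $d$ of $n$, and then answer ``yes'' if and only if $d < k$. Correctness is immediate from the word \emph{smallest}: there exists some non-trivial factor below $k$ precisely when the least such factor is below $k$. This reduction uses one oracle query plus an integer comparison, hence is polynomial in the bit-length of $(n,k)$.

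The other direction, solving (1) from an oracle for (2), is where a small amount of work is needed, and I would handle it by binary search. The key observation is a monotonicity property: define the predicate $P(k)$ to be the answer of (2) on input $(n,k)$, i.e. $P(k) = 1$ iff $n$ has a non-trivial factor strictly less than $k$. If $d$ denotes the smallest non-trivial factor of $n$, then $P(k) = 0$ for all $k \le d$ and $P(k) = 1$ for all $k > d$, so $P$ is a monotone nondecreasing step function with its single jump at $k = d+1$. I would therefore binary-search for this threshold over the range $k \in \{2, 3, \dots, n\}$, locating the least value $k^{*}$ with $P(k^{*}) = 1$ and outputting $d = k^{*} - 1$. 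Each step invokes the oracle once on an input whose size is polynomial in that of $n$, and the search terminates after $O(\log n)$ queries, so the total running time is polynomial.

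It remains to treat the boundary cases cleanly. If the oracle returns $P(n) = 0$, then $n$ has no non-trivial factor at all, so $n$ is prime (or $n \le 1$), and the search procedure for (1) reports accordingly that no non-trivial factor exists; this is exactly the regime in which the binary search finds no jump. I would also fix once and for all the convention for the endpoints of ``less than $k$'' so that the recovered value $k^{*}-1$ is genuinely the smallest non-trivial divisor rather than off by one. I do not expect a genuine obstacle here: this is a standard search-to-decision reduction, and the only place demanding care is pinning down the monotone predicate and its jump location together with the prime-$n$ edge case, so that the $O(\log n)$ binary search provably outputs the correct smallest factor.
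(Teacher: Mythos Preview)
Your proposal is correct and follows essentially the same approach as the paper: one direction uses a single oracle call followed by a comparison, and the other direction uses binary search on the monotone predicate $P(k)$. Your write-up is in fact more careful than the paper's, which simply says ``we can use binary search'' without spelling out the monotonicity argument or the prime-$n$ edge case.
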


	\begin{proof}
		Suppose we solve problem 1 by algorithm $\cA$ in polynomial time. Then, for any input $(n,k)$, we use $\cA$ to factorize $n$ and get its minimal non-trivial factor $m$. Then we compare $m$ and $k$. Thus, we solve problem 2 in polynomial time.

		Suppose we solve problem 2 by algorithm $\cB$ in polynomial time. Then, we can use binary search to find the smallest non-trivial factor of $n$ in polynomial time.
	\end{proof}

	\begin{proposition}[Linear systems of equations]\label{prop:HHL}
		The following two problems are equivalent:
		\begin{enumerate}
			\item {Given a classical description of the $N \times N$ matrix $A$, a unit vector $\ket{b}$, a quantum operator $M$, and a precision $\epsilon$. Output $\langle x | M | x \rangle$  with precision $\epsilon$, $\ket{x}$ is the solution of $A\ket{x} = \ket{b}$.}
			\item {Given a classical description of the $N \times N$ matrix $A$, a unit vector $\ket{b}$, a quantum operator $M$, a real number $a$ , and a precision $\epsilon$. Determine if $\langle x | M | x \rangle$ is less than $a + \epsilon$ (output 0), or is larger than $a + \epsilon / 2$(output 1). $\ket{x}$ is the solution of $A\ket{x} = \ket{b}$.}
		\end{enumerate}
	\end{proposition}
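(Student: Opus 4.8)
The plan is to mirror the two-sided reduction used for Proposition~\ref{prop:Shor} (Factorizing), adapting it from an integer-valued output to a real-valued one. I would establish the two directions separately. The forward direction, that an algorithm for the value version (problem~1) yields one for the decision version (problem~2), is immediate; the reverse direction, that a decider for problem~2 yields an algorithm for problem~1, rests on a binary search over the threshold $a$, exactly as binary search is used to recover the smallest factor in the proof of Proposition~\ref{prop:Shor}.

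For the direction $1 \Rightarrow 2$, suppose an algorithm $\cA$ solves problem~1 in polynomial time, outputting the number $v = \langle x | M | x \rangle$. Then, given any instance $(A, \ket{b}, M, a)$ of problem~2, I would run $\cA$ on $(A, \ket{b}, M)$ to obtain $v$ and return the truth value of $v < a$. This costs one invocation of $\cA$ plus a single comparison, so problem~2 is solved in polynomial time.

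For the direction $2 \Rightarrow 1$, suppose $\cB$ solves problem~2. First I would bound the interval in which $v$ must lie using quantities computable from the classical input: since $\ket{x} = A^{-1}\ket{b}$ and $\ket{b}$ is a unit vector, we have $|v| \le \|M\|\,\|A^{-1}\|^2$, and all the relevant norms can be computed (or safely upper-bounded) in polynomial time from the classical descriptions of $A$ and $M$. Starting from this initial interval $[\ell_0, r_0]$, I would repeatedly query $\cB$ with the current midpoint as the threshold $a$; each answer reveals which half contains $v$, halving the interval. After $k$ iterations the interval has width $(r_0 - \ell_0)\,2^{-k}$, so $O(k)$ queries locate $v$ to $k$ bits of precision.

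The hard part, and the feature that genuinely distinguishes this from the factorizing case, is the real-valued output. Unlike the integer search, where binary search terminates exactly once the factor is isolated, here ``output $v$'' must be read as output to a prescribed precision of $b$ bits, and I would need to argue that $b$ is polynomially bounded in the input size and that each additional bit costs only one query, so that the total query count stays polynomial. A second, related subtlety is error control: because problem~2 is solved only with probability greater than $\tfrac{2}{3}$ per call, the polynomially many calls in the binary search would otherwise compound their errors. I would handle this by amplifying each call via majority voting over $O(\log b)$ independent repetitions, so that a Chernoff bound plus a union bound over all calls leaves the whole procedure correct with probability greater than $\tfrac{2}{3}$, at only a logarithmic multiplicative overhead.
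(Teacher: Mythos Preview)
Your proposal is correct and follows exactly the approach the paper intends: the paper's own proof is a one-line reference to Proposition~\ref{prop:Shor}, saying ``similarly, we can use binary search,'' and you have spelled out precisely that reduction. Your additional care about the real-valued output (bounding the search interval via $\|M\|\,\|A^{-1}\|^2$, fixing a target precision, and amplifying each decision call by majority vote) goes beyond what the paper makes explicit but is entirely in keeping with its sketch.
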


	\begin{proof}
		Suppose we solve problem 1 by algorithm $\cA$ in polynomial time. For an input $(A,\ket{b}, M, a, \epsilon)$ of problem 2, we query a with input $(A,\ket{b}, M, a, \epsilon / 10)$ to get an output $a'$. If $a' < a + \frac{9\epsilon}{10}$, this means  $\bra{x}M\ket{x} \le a' + \frac{\epsilon}{10} < a + \epsilon$, then output 0. Otherwise, $\bra{x}M\ket{x} \ge a' - \frac{\epsilon}{10} \ge a + \frac{4\epsilon}{5}$ output 1. Then, we solve problem 2 in polynomial time.

		Suppose we solve problem 2 by algorithm $\cB$ in polynomial time. For an $(A,\ket{b}, M, \epsilon)$ of problem 1, we can solve problem 1 by performing a binary search in $O( \log (\frac{1}{\epsilon}))$ time on algorithm $\cB$ with decision $\epsilon' = \frac{\epsilon}{10}$ inputted to $\cB$.
	\end{proof}

	Proposition \ref{prop:Shor} and Proposition \ref{prop:HHL} suggest the applicability of Theorem~\ref{thm:complimit} to problems of factoring and solving linear systems, corresponding to Shor's and HHL algorithms, respectively.

		With gate topology constraints, the computational capacity is further weakened, as shown in Lemma~\ref{lem:simulate_1d}. To prove the lemma, we propose an efficient classical simulation algorithm of 1D noisy circuits, detailed in Algorithm~\ref{alg:1d}. To start with, we introduce a decomposition of the circuit, illustrated in FIG.~\ref{fig:LV}.

		\begin{figure}[hbtp!]
			\centering
			\begin{minipage}{0.48\textwidth}
				\subfigure[]{
					\includegraphics[scale=0.3]{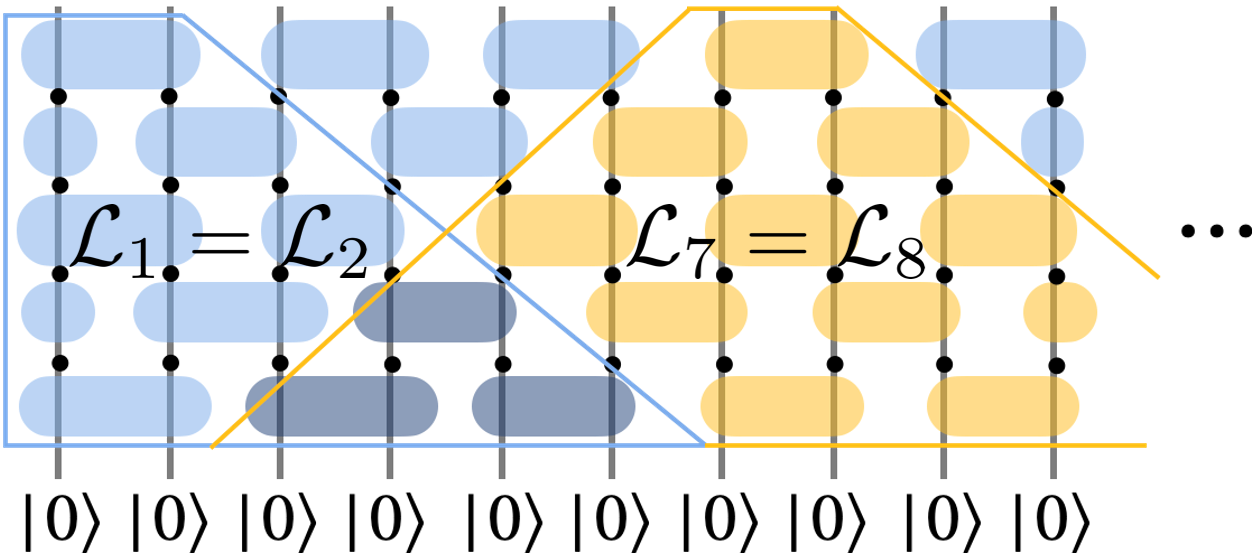}
					\label{fig:L}
				}
			\end{minipage}
			\begin{minipage}{0.48\textwidth}
				\subfigure[]{
					\includegraphics[scale=0.3]{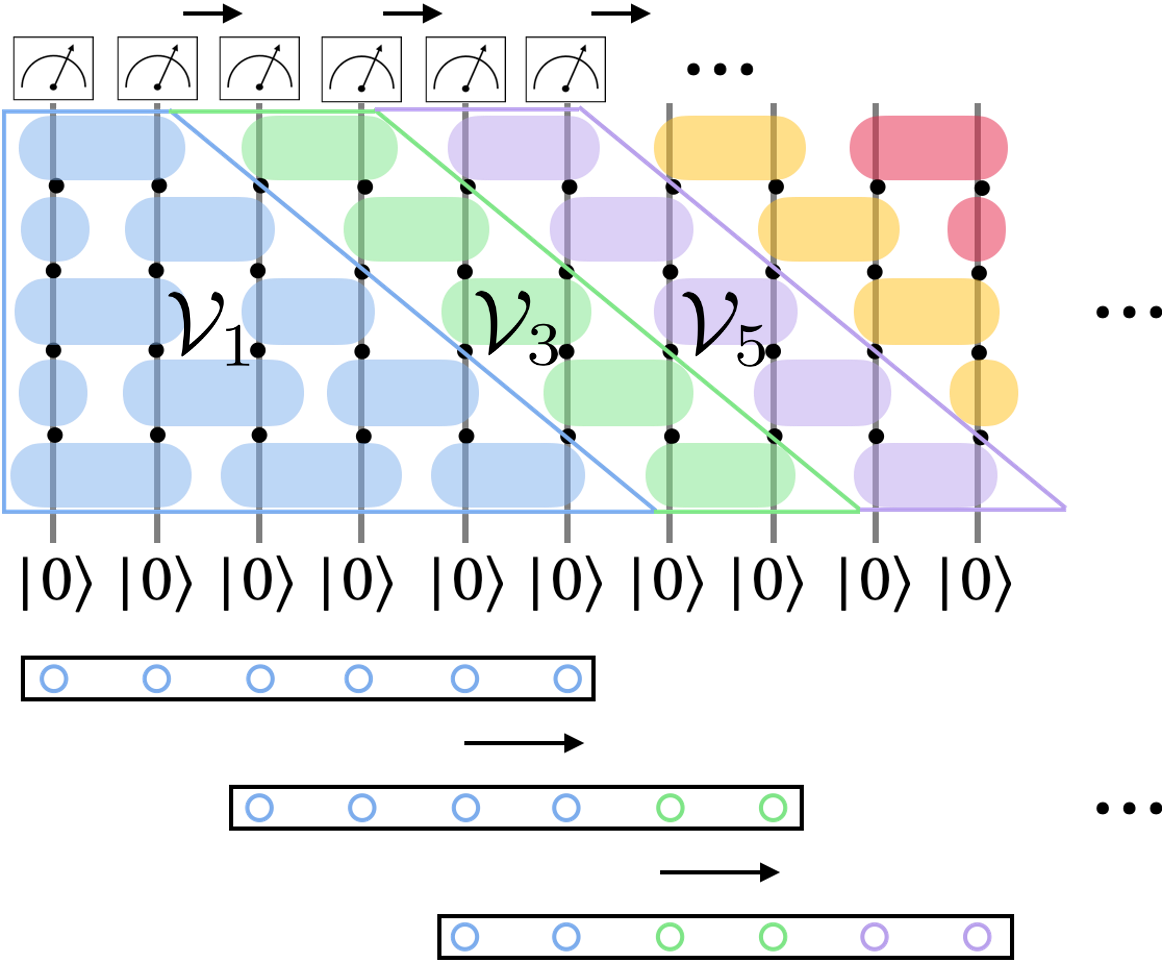}
					\label{fig:V}
				}
			\end{minipage}

			\caption{\textbf{Illustration of the light cone and effective channels for each site and the sampling algorithm.} Here, we show an example of brickwise circuits. \textbf{a} For each site, only quantum gates and noise channels within the light cone of the $i$-th qubit impact the expectation value of local observables at site $i$. The number of qubits in each light cone is at most $2t$. \textbf{b} Due to the limited range of light cones, we can define an effective channel $\cV_i$ on each site and the dynamic sweeping process in the algorithm. In the classical computer, we store the density matrix $t + 1$ qubits, where $t$ represents circuit depth. The procedure involves sweeping over each site to perform a measurement and subsequently discarding the measured qubit. Following this, we introduce the next qubit into the classical computer, applying effective channels on the classical memory as per Eq.~\eqref{eq:sim}. This iterative process continues as we alternatively calculate the conditional probability on each site and sample accordingly. After finishing the whole process, we will obtain a sample string $\bx$ from $p(\bx) = |\bra{x}U\ket{0}^{\otimes n}|^2$. The algorithm operates with a time complexity of $2^{O(t)}$.}
			\label{fig:LV}
		\end{figure}

		The 1D noisy circuit of depth $t$ is a sequence of unitary gates and noise channels, denoted as $\cU$,
		\begin{equation} \label{eq:circuitchannel}
			\cU = \Lambda \circ \mathcal{U}_{t} \circ \Lambda \circ \cdots \circ \Lambda \circ \mathcal{U}_{2} \circ \Lambda \circ \mathcal{U}_{1},
		\end{equation}
		where $\Lambda$ and $\cU_{i}$ are noise channel and unitary gate layers appeared in Eq.~\eqref{eq:rhot}, respectively. We give the following definition of the light cone of each qubit, which is illustrated in FIG.~\ref{fig:L}.

		\begin{definition}[Light cone] \label{def:lightcone}
			For a 1D noisy circuit $\cU$ of depth $t$, the light cone of the $i$-th qubit is defined as the set of all gates and noise channels that can affect the expectation value of local observables at site $i$. The light cone is denoted as $\cL_i$, which includes all gates and noise channels within a distance of $t$ from the $i$-th qubit.
		\end{definition}

		By construction of the light cones, for any observable $O_i$ and density matrix $\rho$, we have $\Tr[\cU(\rho) O_i] = \Tr[\cL_i(\rho) O_i]$. Note that $\cL_i$ only acts non-trivially on $O(t)$ qubits.

		Based on the light cones, we construct the effective channel for each site, denoted as $\cV_i$, following these steps: Let $\cV_1 = \cL_1$. For $i > 1$, define $\cV_i$ by removing the overlapped gates and noisy channels in $\cL_i$ and $\cL_{i-1}$ from $\cL_i$, as illustrated in Fig.~\ref{fig:V}. This leads to a decomposition of the circuit channel in Eq.~\eqref{eq:circuitchannel},
		\begin{equation}
			\begin{split}
				\cU = \cV_n \circ \cV_{n-1} \circ \cdots \circ \cV_1. \label{eq:1ddecomp}
			\end{split}
		\end{equation}
	Based on the decomposition, we introduce the following algorithm. In this algorithm, we classically store the density matrix of $t+1$ qubits. The stored qubits are denoted by the set $T$, which are updated during iterations, illustrated by FIG.~\ref{fig:V}.

	\begin{algorithm}
			\caption{Classical algorithm to sample from 1D noisy quantum circuits} \label{alg:1d}
			\KwData{Circuit $\cU$ in the decomposed form Eq.~\eqref{eq:1ddecomp}.}
			\KwResult{Sample $\{ a_1, \cdots, a_n \}$.}
			$T = \{1, \cdots, t+1\}$\;
			$\rho_{T} \gets \ketbra{0}{0}^{\otimes t+1} $\;
			$i \gets 1$\;
			\For{$i = 1, \cdots, n$}{
				$\rho \gets \cV_i(\rho_{T})$\;
				$a_i \gets \text{sample}\left(p(x_i = 0 | x_1=a_1, \cdots, x_{i-1}=a_{i-1}) = \Tr[\ketbra{0}{0}_i \rho]\right)$\;
				$\rho \gets \frac{\ketbra{a_i}{a_i}_i \rho \ketbra{a_i}{a_i}_i}{\Tr[\ketbra{a_i}{a_i}_i \rho]}$\;
				\If{$i < n - t$}{
					$T \gets T / \{i\} \cup \{i+t+1\}$\;
					$\rho_{T} \gets (\Tr_{i} \rho) \otimes \ketbra{0}{0}_{i+t+1}$\;
				}
			}
	\end{algorithm}

	Algorithm~\ref{alg:1d} follows the method proposed in \cite{bravyi_classical_2021} and adapts it to noisy 1D circuits. Note that for our 1D case, we do not need matrix-product-state approximation as in \cite{bravyi_classical_2021} and the algorithm is therefore exact. In the following proposition, we will show that Algorithm~\ref{alg:1d} provides an exact sampling of the noisy circuits with an analysis of the space and time complexity.

	\begin{proposition}
		Algorithm~\ref{alg:1d} provides an exact sampling of the noisy 1D circuits, which runs in $O(2^{2t} nt)$ computational time with space complexity $O(2^{2t})$, where $t$ is the depth of the circuit and $n$ is the qubit number.
	\end{proposition}

	\begin{proof}
		To sample from each qubit, we need to consider all gates and noise within its light cone, defined in Definition \ref{def:lightcone}. Other parts of the circuit will not affect the sampling result.

		Further, based on the notion of effective channels, we perform the sampling procedure for $p(\bx)$ by successively sampling from the conditional distribution $p(x_i | x_{1}=a_1, \ldots, x_{i-1}=a_{i-1})$. For convenience, we define $[n] = \{1,2,\cdots,n\}$, $M_j = \ketbra{a_j}{a_j} \otimes I_{[n] - j}$ for $a_j \in \{0,1\}$. When measuring the $j$-th qubit of $\rho$, the probability of obtaining result $a_j$ is $\Tr[M_j \rho]$, and the post-measurement state is $S_j(\rho) = \frac{M_j \rho M_j}{\Tr[M_j \rho]}$. With previous measurement outcomes fixed, the conditional probability can be written as
		\begin{equation}
			\begin{split}
				&p(x_i = 0|x_{1}=a_1,\cdots,x_{i-1}=a_{i-1}) \\
				= &\Tr[\ketbra{0}{0}_i S_{i-1} \circ S_{i-2} \circ \cdots \circ S_{1} \circ \cU (\rho_0)]\\
				= &\Tr[\ketbra{0}{0}_i S_{i-1} \circ S_{i-2} \circ \cdots S_{1} \circ \cV_n \circ \cV_{n-1} \circ \cdots \circ \cV_{1} (\rho_0)],
			\end{split}
		\end{equation}
		where $\rho_0 = \ketbra{0}{0}^{\otimes n}$ is the initial state. We observe that $S_i$ and $\cV_{i+1}$ commute since they act on different qubits. Therefore, we rearrange the operators in the equation above so that $\cV_i$ and $S_i$ are applied alternatively,
		\begin{equation}\label{eq:sim}
			\begin{split}
				&p(x_i = 0|x_{1}=a_1,\cdots,x_{i-1}=a_{i-1}) \\
				=& \Tr[\ketbra{0}{0}_i \cV_i \circ S_{i-1} \circ \cV_{i-1} \circ S_{i-2} \circ \cV_{i-2} \circ\cdots S_{1} \circ \cV_1(\rho_0)].
			\end{split}
		\end{equation}

		The conditional probability in Eq.~\eqref{eq:sim} can be estimated by simulating $O(t)$ qubits in a classical computer. Specifically, $\cV_1(\rho_0)$ is supported on $(t + 1)$ qubits, necessitating $2^{O(t)}$ time to compute the density matrix. As we have a complete description of $\cV_1(\rho_0)$ in the classical computer, we can calculate both $p(x_1=a_1)$ and the post-measurement state $S_1(\cV(\rho_0))$. The subsequent measurement process for the second qubit follows a similar procedure. Subsequently, we introduce new qubits in $\cV_3$ and apply $S_3$ to measure the third qubit. Similarly, following Eq.~\eqref{eq:sim}, we can alternate between applying $\cV_{i}$ and processing $S_{i}$, calculate the conditional probability, and sample from each qubit accordingly. The procedure described above is formally expressed in Algorithm~\ref{alg:1d} and depicted in Fig.~\ref{fig:V}.

		We examine the space and time complexity as follows: The algorithm requires simulating $(t + 1)$ qubits simultaneously, which takes up $O(2^{2t})$ space complexity on a classical computer. On $(t + 1)$ qubits, simulating each two-qubit gate, each single-qubit noise channel, or each measurement takes up to $O(2^{2t})$ time. Given that the total number of gates with noise is within $O(n t)$, the time complexity to calculate all conditional probabilities, given previous sampling outcomes, is $O(2^{2t} nt)$. Our algorithm operates in $O(2^{2t} nt)$ computational time, with space complexity $O(2^{2t})$, thereby proving Lemma \ref{lem:simulate_1d}.
	\end{proof}

	Note that our analysis and the resulting Lemma \ref{lem:simulate_1d} are not restricted to any specific type of noise channel. Our results apply to noisy quantum devices with $O\left(\log (n)\right)$ depth under various forms of local noise, including the noiseless case, in contrast to previous works \cite{aharonov_limitations_1996, chen_complexity_2023, de_palma_limitations_2023}. When focusing on the single-qubit depolarizing noise channel, we can then prove Theorem~\ref{thm:1dcomplimit} by combining Theorem~\ref{thm:2_formal} with Lemma \ref{lem:simulate_1d} and substituting $t$ in the latter with $\frac{1}{|\log(\mu)|} (\log(T(n)) + 5)$.

	We consider 1D and 2D gate locality, depicted in Fig.~\ref{fig:topo}. The results are summarized and compared with the numerical results from the special case of random noisy circuits \cite{noh_efficient_2020,li_entanglement_2023} in Table~\ref{tab:boundscorr}.

	\begin{figure}[hbtp!]
		\centering
		\subfigure[One-dimensional lattice]{
			\begin{minipage}[t]{0.33\linewidth}
				\centering
				\raisebox{0.8cm}{\begin{tikzpicture}[scale=0.8]
						\coordinate (O) at (0,0);
						\foreach \x in {0,1,2,3,4} {
							\filldraw[black] (\x,1) circle (0.1);
						}
						\foreach \x [count=\xi] in {1,2,3,4} {
							\draw (\xi-1,1) -- (\xi,1);
						}
				\end{tikzpicture}}
			\end{minipage}%
		}%
		\subfigure[Two-dimensional lattice]{
			\begin{minipage}[t]{0.33\linewidth}
				\centering
				\begin{tikzpicture}[scale=0.8]
					\foreach \x in {0,1,2} {
						\foreach \y in {0,1,2} {
							\filldraw[black] (\x,\y) circle (0.1);
						}
					}
					\foreach \x in {0,1,2} {
						\foreach \y in {0,1} {
							\pgfmathtruncatemacro{\nexty}{\y+1}
							\draw (\x,\y) -- (\x,\nexty);
						}
					}
					\foreach \x in {0,1} {
						\foreach \y in {0,1,2} {
							\pgfmathtruncatemacro{\nextx}{\x+1}
							\draw (\x,\y) -- (\nextx,\y);
						}
					}
				\end{tikzpicture}
			\end{minipage}%
		}%
		\caption{Two typical topology setups for noisy quantum devices are studied in this work. The filled circles denote qubits. The lines are connections of those qubits where quantum gates can be placed.}
		\label{fig:topo}
	\end{figure}

	\begin{table}[hbtp!]
		\centering
		\caption{Bounds of quantum mutual information and relative entropy of entanglement in a noisy quantum device, when the qubit number is large. For 2D lattices, $A$ is a square with the size of $(n^\frac{1}{2}, n^\frac{1}{2})$. The small $n$ case is ignored in this table but considered in the theorems. Previous works numerically study the entanglement production of noisy circuits with random two-qubit gates arranged in a brick-wise architecture, which is a special case of our model. We list their results for comparison.}
		\label{tab:boundscorr}
		\begin{tabular}{c|c|c|c}
			Topology & Bounds & Scaling & Random noisy circuits \cite{noh_efficient_2020,li_entanglement_2023} \\
			\hline
			1D & $2\frac{\log (n)}{|\log(\mu)|}$  & $O(\log (n))$ & O(1) \\
			2D & $\frac{2\log(\frac{n}{2})}{|\log(\mu)|} n^{\frac{1}{2}}$ & $O(n^{\frac{1}{2}} \log(n))$ & $O(n^{\frac{1}{2}})$
		\end{tabular}
	\end{table}

	A key observation is that gates act on the qubits in a spatial locality, respecting the 1D chain topology. Consider a qubit in the chain within $t$ layers. Its interaction should be restrained within a distance of $t$. The restraint can be understood through a physical picture of a light cone, illustrated in Fig.~\ref{fig:lightcone}. The locality will limit entanglement spreading and the bipartite entanglement between halves of the chain. We develop the intuition into the following lemma.
	\begin{figure}[hbtp!]
		\centering
		\includegraphics[width=0.3\textwidth]{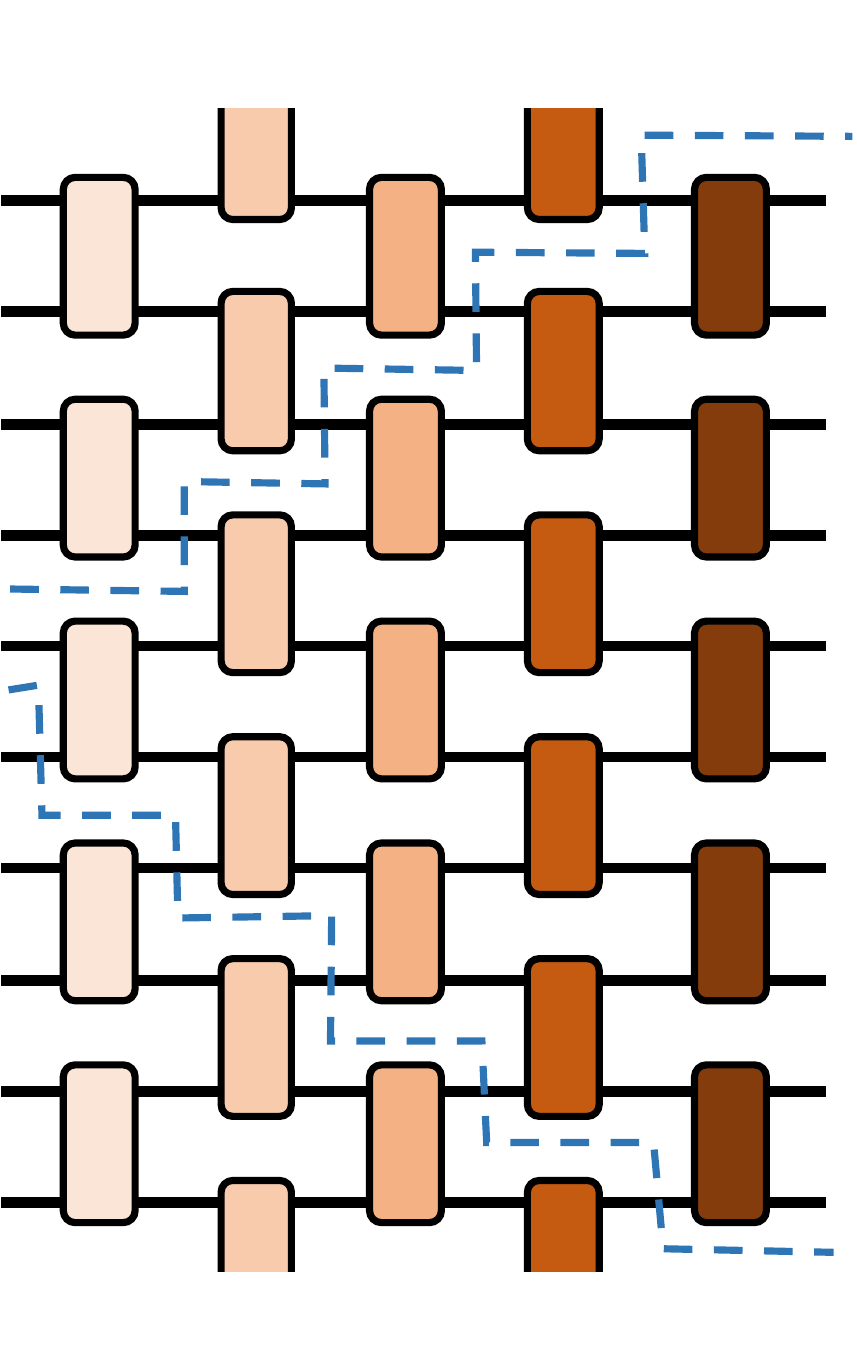}
		\caption{The physical picture of the light cone in one dimension. Dot lines are boundaries of the light cone. We use a specific brickwise architecture for illustration, which is not required. The color of the layers gets darker with increasing depth, representing that the state is approaching $\sigma_0$ according to Lemma \ref{lem:entropy}.}
		\label{fig:lightcone}
	\end{figure}

	\revise{
		\begin{lemma} \label{lem:1dlight}
			For a 1D lattice, namely a chain, at a fixed number of layers $t$, bipartite entanglement $E(\rho(t))$ between a contiguous subsystem $A$ and its complement part $\bar{A}$ is upper bounded by
			\begin{itemize}
				\item $4t$ if $A$ is in the bulk of the chain;
				\item $2t$ if $A$ is not in the bulk, i.e., contains one end of the chain.
			\end{itemize}
			The upper bounds hold for any quantities $E$ that do not increase under local operations and are upper bounded by the system size.
		\end{lemma}
		\begin{proof}
			Recall the output state is given by,
			\begin{equation} \label{eq:state_again}
				\rho(t) = \Lambda \circ \mathcal{U}_{t} \circ \Lambda \circ \cdots \circ \Lambda \circ \mathcal{U}_{2} \circ \Lambda \circ \mathcal{U}_{1} (\ketbra{0}^{\otimes n}),
			\end{equation}

			We will prove the lemma by iteratively reducing layers of gates and noise channels. The reduction is based on the fact that local operations cannot increase entanglement. The states in iterations are denoted by $\rho^\prime (k)$ where $k=t, t-1,\cdots, 1$ is the decreasing layer number. We start from the last layer, i.e., the $t$-th layer.

			Now, we divide gates in $\mathcal{U}_{t}$ into two kinds: $\mathcal{U}^{(1)}_{t}$ contains gates that act across $A$ and $\bar{A}$; while $\mathcal{U}^{(2)}_{t}$ contains other gates, which act on two qubits both in a same subsystem. Importantly, $\mathcal{U}^{(2)}_{t}$ is a local operation concerning the partition and has no contribution to entanglement. We construct the following state by reducing $\Lambda$ and $\mathcal{U}^{(2)}_{t}$,
			\begin{equation}
				\rho^\prime (t) =  \mathcal{U}^{(1)}_{t} (\rho(t-1)), \label{eq:reduced}
			\end{equation}
			satisfying $\rho(t) =  \Lambda \circ \mathcal{U}^{(2)}_{t} (\rho^{\prime}(t))$. It can be converted to $\rho(t)$ by local operations, thus $E(\rho (t)) \le E(\rho^\prime(t))$. And by the definition of $\mathcal{U}^{(1)}_{t}$, the size of its support is bounded by $|\sup(\mathcal{U}^{(1)}_{t})| \le 4$ if $A$ is in the bulk of the chain, and $|\sup(\mathcal{U}^{(1)}_{t})| \le 2$ if $A$ contains one end of the chain. The difference is due to the number of connections across $A$ and $\bar{A}$.

			Similarly, to reduce the ($t-1$)th layer, we again divided $\mathcal{U}_{t-1}$ into two kinds. This time, the gates and noise to be reduced must satisfy an additional condition that they should commute with $\mathcal{U}^{(1)}_{t}$, which requires that they do not overlap with $\sup \mathcal{U}^{(1)}_{t}$ in general.
			\begin{equation}
				\rho^\prime (t-1) =  \mathcal{U}^{(1)}_{t} \circ \Lambda_{\sup(\mathcal{U}^{(1)}_{t}) \cup \sup(\mathcal{U}^{(1)}_{t-1})} \circ \mathcal{U}^{(1)}_{t-1} (\rho(t-2)) = \mathcal{U}^{\prime}_{t-1} (\rho(t-2)), \label{eq:reduced2}
			\end{equation}
			where the combined channel $\mathcal{U}^{\prime}_{t-1}$ satisfying $|\sup(\mathcal{U}^{\prime}_{t-1})| \le 8$ if $A$ is in the bulk of the chain and $|\sup(\mathcal{U}^{\prime}_{t-1})| \le 4$ if $A$ contains one end of the chain. Local operations can again convert $\rho^\prime (t-1)$ to $\rho^\prime (t)$ so that $E(\rho^\prime (t)) \le E(\rho^\prime (t-1))$.

			We follow the procedure and iteratively reduce all layers of gates. Formally, we define the following notions at the $k$ layers, i.e., the $n+1-k$ step of the iteration:
			\begin{itemize}
				\item $\mathcal{U}^{(1)}_{k}$: gates whose support overlaps with $\sup(\mathcal{U}^{(1)}_{m})$ for some $m > k$ or act across $A$ and $\bar{A}$;
				\item $\mathcal{U}^{(2)}_{k}$: other gates in $\mathcal{U}_{k}$, which can be reduced;
				\item $\mathcal{U}^{\prime}_{k} = \mathcal{U}^{(1)}_{k} \circ \Lambda_{\sup(\mathcal{U}^{(1)}_{k}) \cup \sup(\mathcal{U}^{\prime}_{k+1})} \circ \mathcal{U}^{\prime}_{k+1}$: the combined channel after reducing the $k$th layer;
				\item $\rho^\prime (k) =  \mathcal{U}^{\prime}_{k} (\rho(k-1))$: the resulting state after reducing the $\mathcal{U}^{(2)}_{k}$.
			\end{itemize}
			By this construction, we have $\rho^\prime (k) = \Lambda_{\overline{\sup(\mathcal{U}^{\prime}_{k})}} \circ \mathcal{U}^{(2)}_{k}(\rho^\prime (k-1))$ and $E(\rho^\prime (k)) \le E(\rho^\prime (k-1))$ because the reduced part $\Lambda_{\overline{\sup(\mathcal{U}^{\prime}_{k})}}$ and $\mathcal{U}^{(2)}_{k}$ and $\Lambda$ cannot increase any entanglement.
			Also, the size of the support of $\mathcal{U}^{\prime}_{k}$ is bounded by $|\sup(\mathcal{U}^{\prime}_{k-1})| \le |\sup(\mathcal{U}^{\prime}_{k})| + 4$ if $A$ is in the bulk of the chain and $|\sup(\mathcal{U}^{\prime}_{k})| \le |\sup(\mathcal{U}^{\prime}_{k-1})| + 2$ if $A$ contains one end of the chain.

			After $t$ iterations, all layers are reduced, and we reach the first layer. The final resulting state is $\rho^\prime (1) =  \mathcal{U}^{\prime}_{1} (\ketbra{0}^{\otimes n})$ with $|\sup(\mathcal{U}^{\prime}_{1})| \le 4t$ if $A$ is in the bulk of the chain and $|\sup(\mathcal{U}^{\prime}_{1})| \le 2t$ if $A$ contains one end of the chain. With the assumption on the entanglement measure upper bounded by the system size, we conclude that $E(\rho(t)) \le E(\rho^\prime (1))$ and obtain the final result. The whole procedure is illustrated in Fig.~\ref{fig:reduce}.

			\begin{figure}[hbtp!]
				\centering
				\subfigure[The original circuit without reduction, producing $\rho(t)$ where $t=3$.]{
					\begin{minipage}[t]{0.4\linewidth}
						\centering
						\begin{quantikz}[row sep=0.8cm,column sep=0.1cm]
							& \gate[wires=2]{} & \gate{\Lambda_1} & \qw              & \gate{\Lambda_1} & \gate[wires=2]{} & \gate{\Lambda_1}  & \qw \\
							&                  & \gate{\Lambda_1} & \gate[wires=2]{} & \gate{\Lambda_1} &                  & \gate{\Lambda_1}  & \qw \\
							& \gate[wires=2]{} & \gate{\Lambda_1} &                  & \gate{\Lambda_1} & \gate[wires=2]{} & \gate{\Lambda_1}  & \qw \\
							&                  & \gate{\Lambda_1} & \gate[wires=2]{} & \gate{\Lambda_1} &                  & \gate{\Lambda_1}  & \qw \\
							& \gate[wires=2]{} & \gate{\Lambda_1} &                  & \gate{\Lambda_1} & \gate[wires=2]{} & \gate{\Lambda_1}  & \qw \\
							&                  & \gate{\Lambda_1} & \qw              & \gate{\Lambda_1} &                  & \gate{\Lambda_1}  & \qw \\
						\end{quantikz}
					\end{minipage}%
				}
				\subfigure[After one iteration of reduction, producing $\rho^\prime (t=3)$ in Eq.~\eqref{eq:reduced}.]{
					\begin{minipage}[t]{0.4\linewidth}
						\centering
						\begin{quantikz}[row sep=0.8cm,column sep=0.1cm]
							& \gate[wires=2]{} & \gate{\Lambda_1} & \qw              & \gate{\Lambda_1} & \qw              & \qw \\
							&                  & \gate{\Lambda_1} & \gate[wires=2]{} & \gate{\Lambda_1} & \qw              & \qw \\
							& \gate[wires=2]{} & \gate{\Lambda_1} &                  & \gate{\Lambda_1} & \gate[wires=2]{\mathcal{U}^\prime_3} & \qw \\
							&                  & \gate{\Lambda_1} & \gate[wires=2]{} & \gate{\Lambda_1} &                  & \qw \\
							& \gate[wires=2]{} & \gate{\Lambda_1} &                  & \gate{\Lambda_1} & \qw              & \qw \\
							&                  & \gate{\Lambda_1} & \qw              & \gate{\Lambda_1} & \qw                 & \qw \\
						\end{quantikz}
					\end{minipage}%
				}%

				\subfigure[After two iterations of reduction, producing $\rho^\prime (t-1=2)$ in Eq.~\eqref{eq:reduced2}.]{
					\begin{minipage}[t]{0.4\linewidth}
						\centering
						\begin{quantikz}[row sep=0.8cm,column sep=0.1cm]
							& \gate[wires=2]{} & \gate{\Lambda_1} & \qw                                 & \qw \\
							&                  & \gate{\Lambda_1} & \gate[wires=4]{\mathcal{U}^\prime_2} & \qw \\
							& \gate[wires=2]{} & \gate{\Lambda_1} &                                     & \qw \\
							&                  & \gate{\Lambda_1} &                                     & \qw \\
							& \gate[wires=2]{} & \gate{\Lambda_1} &                                     & \qw \\
							&                  & \gate{\Lambda_1} & \qw                                 & \qw \\
						\end{quantikz}
					\end{minipage}
				}%
				\subfigure[Ater three iterations of reduction, producing $\rho^\prime (t-2=1)$.]{
					\begin{minipage}[t]{0.4\linewidth}
						\centering
						\begin{quantikz}[row sep=0.8cm,column sep=0.1cm]
							& \gate[wires=6]{\mathcal{U}^\prime_1} & \qw \\
							&                                     & \qw \\
							&                                     & \qw \\
							&                                     & \qw \\
							&                                     & \qw \\
							&                                     & \qw \\
						\end{quantikz}
					\end{minipage}
				}%
				\caption{The iterative reduction of layers of gates. We show a case of six qubits across one boundary between $A$ and $\bar{A}$ as a part of a larger quantum device for illustration. The half $A$ contains the upper three qubits, while $\bar{A}$ contains the rest. We show the iterative reduction by a series of subfigures.}
				\label{fig:reduce}
			\end{figure}
		\end{proof}

		Lemma~\ref{lem:1dlight} is a generalization of previously established bounds for the entanglement entropy of pure states \cite{bravyi_lieb-robinson_2006, van_acoleyen_entanglement_2013, vershynina_entanglement_2019, eisert_entangling_2021}. We derive this result for the mixed-state case, which is essentially the case for noisy quantum devices. Then, by combining the results with Lemma \ref{lem:entropy}, we derive the upper bounds of quantum relative entropy of entanglement and quantum mutual information as Theorem~\ref{the:1dent} in the main text.
		\begin{proof}[Theorem~\ref{the:1dent}'s proof]
			First, we show that both the quantum relative entropy of entanglement and quantum mutual information are upper bounded by  $D(\rho \| \sigma_0)$. For the quantum relative entropy of entanglement, this is because
			\begin{equation}
				E_{R}(A:\bar{A}) = \min_{\sigma \in {\rm SEP}} D(\rho \| \sigma) \le D(\rho \| \sigma_0),
			\end{equation}
			where ${\rm SEP}$ denotes the set of separable states over $A$ and $\bar{A}$ partition. For quantum mutual information, this is because
			\begin{equation}
				I(A : \bar{A}) = S(A) + S(\bar{A}) - S(\rho) \le n - S(\rho) = D(\rho \| \sigma_0).
			\end{equation}
			And also we know $E_{R}(A:\bar{A})$ is upper bounded by $I(A : \bar{A})$, because $E_{R}(A:\bar{A}) = \min_{\sigma \in {\rm SEP}} D(\rho \| \sigma) \le D(\rho \| \rho_A \otimes \rho_{\bar{A}}) = I(A : \bar{A})$. Combined with Lemma \ref{lem:entropy},
			\begin{equation}
				I(A:\bar{A}) \le D(\rho \| \sigma_0)\le n \mu^{t}.
			\end{equation}
			From Lemma \ref{lem:1dlight}, we have another upper bounds
			\begin{equation}
				I(A:\bar{A}) \le 2t.
			\end{equation}
			By combining the two upper bounds, we have
			\begin{equation}
				I(A:\bar{A})\le \min \{ n \mu^{t}, 2t \} \le 2 t^\star,
			\end{equation}
			where $t^\star$ is a real number that satisfied $n \mu^{t^\star} \ge 2t^\star$. Note that $t$ is an integer, but the variable $t^\star$ extends it to real-valued. If $t^\star \ge 1$, we will have $n \mu^{t^\star} \ge 1$ and $I(A:\bar{A}) \le 2t^\star \le \frac{2\log(\frac{n}{2})}{|\log(\mu)|}$; otherwise, $I(A:\bar{A}) \le 2t^\star < 2$.
		\end{proof}
	}

	Besides the entanglement between adjacent regions, we also investigate the entanglement between two distant parts in a noisy qubit chain. Consider two distant, contiguous regions, $A$ and $B$. We define their distance $d(A, B)$ as the shortest path connecting them in a given qubit connection topology. If no noise exists, the entanglement between the two arbitrarily far parts can reach the optimal value of $\min (|A|,|B|)$ after a depth more than their distance. However, when the device suffers from noise and its distance is far, such a depth cannot be reached before the system gets too noisy. Formally, we have the following theorem.

	\begin{theorem}
		In a 1D noisy quantum circuit, the entanglement between the two distant regions $A$ and $B$, with distance $d(A,B)$ is upper bounded by
		\begin{equation}
			E(A:B) \le (|A| + |B|) (1-p)^{2d(A,B)} + \frac{4 \mu}{1-\mu}.
		\end{equation}
	\end{theorem}
	\begin{proof}
		Following the previously used method, we consider the loss of information in the system $AB$. Unlike in the previous problem, $AB$ may gain information by interacting with the outside systems. $AB$ can only interact with the four qubits adjacent to their boundaries for each layer in a 1D chain. Therefore, four bits of information can be regained at most before depolarizing noise comes. The information loss in a layer will be
		\begin{equation} \label{eq:infolosssubapp}
			D(\rho(t+1)_{AB} \| (\sigma_0)_{AB}) \le \mu \left[D(\rho(t)_{AB} \| (\sigma_0)_{AB}) + 4\right].
		\end{equation}
		We rewrite it into
		\begin{equation}
			D(\rho(t+1)_{AB} \| (\sigma_0)_{AB}) - \frac{4 \mu}{1-\mu} \le \mu \left[ D(\rho(t)_{AB} \| (\sigma_0)_{AB}) - \frac{4 \mu}{1-\mu} \right].
		\end{equation}
		This suggest that $D(\rho(t)_{AB} \| (\sigma_0)_{AB}) - \frac{4 \mu}{1-\mu}$ undergoes an exponential decay. Note that $D(\rho(0)_{AB} \| (\sigma_0)_{AB}) = |A| + |B|$ takes the maximal value. We will have an exponentially decaying upper bound of entanglement with an extra residual term $\frac{4 \mu}{1-\mu}$.
		\begin{equation}
			E(A:B) \le D(\rho(t)_{AB} \| (\sigma_0)_{AB}) \le (|A| + |B|) \mu^{t} + \frac{4 \mu}{1-\mu}.
		\end{equation}

		From gate locality, we know that when $t < d(A, B)$, the entanglement will be strictly zero. To see this, we can still use the reduction techniques shown in Fig.~\ref{fig:reduce} and do the same tricks. Combining this with the bounds we just obtained, we will eventually have
		\begin{equation}
			E(A:B) \le (|A| + |B|) (1-p)^{2d(A,B)} + \frac{4 \mu}{1-\mu}.
		\end{equation}
	\end{proof}

	\revise{
		For 2D lattices, we take the lattice as a regular square, whose size is $(n^{1/2}, n^{1/2})$, and also derive the entanglement bounds below.
		\begin{theorem}
			In a 2D lattice, whose shape is $(n^{1/2}, n^{1/2})$, for both quantum mutual information and quantum relative entropy of entanglement, we have upper bounds
			\begin{equation} \label{eq:2dent}
				E(A:\bar{A}) \le \max \left\{ \frac{2\log(\frac{n}{2})}{|\log(\mu)|} n^{\frac{1}{2}}, 4 n^{\frac{1}{2}} \right\},
			\end{equation}
			where $E$ is entanglement measured by quantum mutual information or relative entropy of entanglement.
		\end{theorem}
		\begin{proof}
			Similar to the 1D case, the proof proceeds by bounding the size of the support of the final effective operator, $|\sup(\mathcal{U}^{\prime}_{1})|$, which bounds the entanglement $E(\rho(t))$. This support region is contained by the ``band'' of all qubits within a Manhattan distance, i.e., the sum of the horizontal and vertical distances, of $t$ from the boundary of the subsystem $A$. The area of this band is the difference between the area of a square with side length $n^{\frac{1}{2}} + t$ and the area of a square with side length $n^{\frac{1}{2}} - t$.

			So the size of the support of the reduced channel, at last, will be upper-bounded
			\begin{equation}
				|\sup(\mathcal{U}^{\prime}_{1})| \le (n^{\frac{1}{2}}+t)(n^{\frac{1}{2}}+t) - (n^{\frac{1}{2}}-t)(n^{\frac{1}{2}}-t) = 4tn^{\frac{1}{2}}.
			\end{equation}

			For both quantum mutual information and quantum relative entropy of entanglement, denoted by $E(A:\bar{A})$, we have the following two upper bounds,
			\begin{equation}
				\begin{split}
					E(A:\bar{A}) & \le n \mu^{t}, \\
					E(A:\bar{A}) & \le 4tn^{\frac{1}{2}}.
				\end{split}
			\end{equation}
			Combining the two bounds, we have
			\begin{equation}
				E(A:\bar{A}) \le \min \{ n \mu^{t}, 4t n^{\frac{1}{2}} \} \le 4t^\star n^{\frac{1}{2}},
			\end{equation}
			where $t^\star$ is a real number that satisfies $n \mu^t \ge 4n^\frac{1}{2}t$.
			If $t^\star \ge 1$, we will have $n \mu^{t^\star} \ge 4n^{\frac{1}{2}}$ and $t^\star \le \frac{\frac{1}{2}\log(\frac{n}{2})}{|\log(\mu)|}$, thus $I(A:\bar{A}) \le 4t^\star n^{\frac{1}{2}} \le  \frac{2\log(\frac{n}{2})}{|\log(\mu)|}$; otherwise, $I(A:\bar{A}) \le 4 \sqrt{n}$. The same bound also holds for the quantum relative entropy of entanglement $E_{R}(A:\bar{A}) \le I(A:\bar{A})$.
		\end{proof}
	}

	When $n$ is sufficiently large, the upper bound scales as its leading term as $n^\frac{1}{2} \log (n)$, exhibiting an area-law scaling with an additional logarithmic factor. As in the 1D case, the proposed entanglement scaling limits the simulation of 2D quantum systems.

	\textit{Note added.}---After posting our work on arXiv, we realized that a recent work had also obtained an upper bound on entanglement growth regarding noisy circuit depth in their Lemma~14 using a different approach \cite{baspin_lower_2023}.

	\section*{Data availability}
	The data supporting the findings of this study are available from the first author upon reasonable request.

	\section*{Code availability}
	The theoretical results of the manuscript are reproducible from the analytical formulas and derivations presented therein. Additional code is available from the first author upon reasonable request.

	\bibliography{./materials/bibNisq.bib}

	\section*{Acknowledgements}
	The authors acknowledge Dorit Aharonov, Michael Ben-Or, and Li Gao for their helpful discussion about the convergence of quantum states. The authors acknowledge Ryuji Takagi for his helpful discussion about the convergence bound in qubit systems \cite{kastoryano_quantum_2013} and about error mitigation \cite{takagi_fundamental_2022, takagi_universal_2023}. The authors also thank Libor Caha, Honghao Fu, Yizhi Huang, Chenxu Li, Weikang Li, Guoding Liu, Chendi Yang, Xiao Yuan, and Xingjian Zhang for their helpful discussion. This work was supported by the National Natural Science Foundation of China Grant No.~12174216, and the Innovation Program for Quantum Science and Technology Grant No.~2021ZD0300804.

	\section*{Author contributions}
	Y.Y. initialized the work. Y.Y. and Z.D. derived the analytical theory and wrote the main manuscript text under the supervision of X.M. J.C. derived the entanglement capacity bound presented in Results. All authors reviewed and revised the manuscript.

	\section*{Competing interests}
	The authors declare no competing interests.

\end{document}